\documentclass[acmtog, nonacm,natbib=false]{acmart}
\AtBeginDocument{%
  \providecommand\BibTeX{{%
    \normalfont B\kern-0.5em{\scshape i\kern-0.25em b}\kern-0.8em\TeX}}}

\setcopyright{acmlicensed}
\copyrightyear{2024}
\acmYear{2024}
\acmDOI{XXXXXXX.XXXXXXX}

\acmConference[Conference acronym 'XX]{Make sure to enter the correct
  conference title from your rights confirmation email}{June 03--05,
  2018}{Woodstock, NY}
%
%
\acmBooktitle{Tallinn '24: Thirty-Ninth Annual ACM/IEEE Symposium on Logic in Computer Science (LICS),
 June 03--05, 2018, Tallinn, Estonia} 
\acmISBN{978-1-4503-XXXX-X/18/06}





\usepackage{quiver}
\usepackage{lettrine}
\usepackage{stmaryrd}
\usepackage{bbm}
\usepackage{fourier-orns}

\usepackage{mathpartir}

\newtheoremstyle{note}
  {2pt} 
  {2pt} 
  {} 
  {1em} 
  {\itshape} 
  {.} 
  {0.5em} 
  {} 

\theoremstyle{remark}

\theoremstyle{note}
\newtheorem*{note}{Note}

\newtheoremstyle{dfn}
  {2pt} 
  {2pt} 
  {} 
  {1em} 
  {\bfseries} 
  {} 
  {0.5em} 
  {} 

\theoremstyle{dfn}
\newtheorem{dfn}{Definition}[section]

\theoremstyle{dfn}
\newtheorem{example}{Example}[section]

\newcommand{\llp}[1]{\llparenthesis #1 \rrparenthesis}
\newcommand{\llb}[1]{\llbracket #1 \rrbracket}

\settopmatter{printacmref=false}
\renewcommand\footnotetextcopyrightpermission[1]{}

\usepackage[style=alphabetic]{biblatex}
\bibliography{main}

\usepackage{hyperref}
\usepackage{xurl}
\hypersetup{breaklinks=true}

\begin{document}

\title{\aldine ~ Foundations of Substructural Dependent Type Theory}
\author{C.B. Aberlé}
\email{caberle@andrew.cmu.edu}
\affiliation{\institution{Carnegie Mellon University} \country{USA}}




\begin{abstract}
  \textbf{Abstract:} This paper presents preliminary work on a general system for integrating dependent types into substructural type systems such as linear logic and linear type theory. Prior work on this front has generally managed to deliver type systems possessing either syntax or semantics inclusive of certain practical applications, but has struggled to combine these all in one and the same system. Toward resolving this difficulty, I propose a novel categorical interpretation of substructural dependent types, analogous to the use of monoidal categories as models of linear and ordered logic, that encompasses a wide class of mathematical and computational examples. On this basis, I develop a general framework for substructural dependent type theories, and proceed to prove some essential metatheoretic properties thereof. As an application of this framework, I show how it can be used to construct a type theory that satisfactorily addresses the problem of effectively representing cut admissibility for linear sequent calculus in a logical framework.
\end{abstract}




\maketitle

\section{The Past, Present \& Future of Substructural Dependent Type Theory}

\lettrine{D}{ependent type theory} in the mould of Martin-Löf's \emph{intuitionistic type theory} \cite{martin-lof} promises to internalize mathematical reasoning into systems for constructing proof and program alike. Likewise, substructural type systems akin to Girard's \emph{linear logic} \cite{girard} seek to reflect the fundamental insight that \emph{truth is ephemeral}, and to thereby capture notions of \emph{state, resources, etc.} in the proofs/programs they afford. Yet these two typing disciplines, both alike in dignity, have proven resistant to a satisfactory union.

Many authors seeking to combine substructural and dependent typing have followed the solution posed by Cervesato \& Pfenning \cite{cervesato-pfenning}, whereby contexts of would-be substructural dependent type theories are to be cleft in twain – on one side an \emph{intuitionistic} context of variables upon which others may depend, and on the other a context of \emph{substructural} variables amongst which there inheres no dependency. This approach is not without theoretical merit nor practical utility, but by its very nature, it falls short of fully capturing substructural reasoning about proofs/programs, since substructurality must end where dependency begins. As Jason Reed observes \cite{reed}, this has the practical consequence that one cannot give an effective representation of cut-admissibility for linear sequent calculus in a linear logical framework constructed in this way. Such a representation must reflect substructural constraints on terms into the types in which those terms appear – precisely what is disallowed by forfeiting dependency to the intuitionistic layer.

Other solutions to the problem of combining substructural and dependent types have been proposed, e.g. forms of \emph{quantitative type theory} \cite{mcbride} \cite{atkey}. However, the majority of these share with Cervesato \& Pfenning's the confinement of dependent type formation to an essentially intuitionistic layer of the theory, which thereby runs afoul of the same example highlighted above by Reed.

What is distinctly lacking from the above attempts at integrating substructural and dependent types is a solidly agreed-upon semantic basis in mathematics upon which to build the type-theoretic apparatus. To be sure, many of the above-mentioned dependent type theories have been given denotational semantics, some quite elegant, often as variations on the categorical semantics of dependent type theory, with some additional structure to make sense of the substructural component (e.g. \cite{krishnaswami-pradic-benton}). Yet, as illustrated by such troublesome cases as Reed's, none of these semantics capture the full generality of substructural dependent types in the same way that, e.g., the usual categorical semantics for linear and ordered logic in monoidal categories do. Upon reflection, it should be clear why this would be the case -- the semantics of linear logic were arrived at not by \emph{adding} structure to the semantics of intuitionistic type theory, but rather by carefully \emph{removing} such structure.

I thus propose to develop a truly general substructural dependent type theory by generalizing the categorical semantics of dependent type theory to a novel class of structures, which I call \emph{left-fibred double categories} (LFDCs), as a unifying framework for both dependent and substructural type theories. I establish metatheoretic desiderata, such as decidability, admissibility of substitution, etc., for the resulting type theory, and finally, to illustrate its general applicability, I turn it upon the example of representing cut-admissibility for linear sequent calculus, and show how the type theory succeeds at this task where others have previously failed.

\section{Sketch of a Semantics}

I begin with a reflection upon and generalization of the categorical semantics of dependent types. For present purposes, I keep the exposition somewhat informal, with fully rigorous definitions and proofs to be conducted in future work.

\subsection{Revisting the Semantics of Dependent Type Theory}

Toward defining a class of structures capable of modelling substructural dependent types, I first recall the ordinary categorical semantics of dependent types. Various definitions exist in the literature of categorical structures that model dependent types, e.g. categories with families, display map categories, natural models, etc. For present purposes, however, it will suffice for us to consider only the most general class of such models -- \emph{comprehension categories}.

\begin{dfn}
\label{compcat} A \emph{comprehension category}, viewed as a model of dependent type theory (c.f. Jacobs \cite{jacobs}), comprises:

\begin{enumerate}
\item A category $\mathfrak{C}$ of \emph{contexts}, with objects $\Gamma, \Delta$ representing contexts and morphisms $f : \Gamma \xrightarrow[]{} \Delta \in \mathfrak{C}$ viewed as \emph{substitutions}.

\item For each context $\Gamma \in \mathfrak{C}$, a category $\mathfrak{T}(\Gamma)$ of \emph{types}, with objects $S,T$ interpreted as types dependent upon $\Gamma$ and morphisms $t : S \xrightarrow[]{} T \in \mathfrak{T}(\Gamma)$ as terms of type $T$ in context $\Gamma$, additionally parameterized by $S$.

\item For each $f : \Gamma \xrightarrow[]{} \Delta \in \mathfrak{C}$, a functor $f^* : \mathfrak{T}(\Delta) \to \mathfrak{T}(\Gamma)$ representing the \emph{application} of the substitution $f$, such that the assignment of functors $f^*$ preserves identities and composites of substitutions up to coherent natural isomorphism.

\item For each $\Gamma \in \mathfrak{C}$, a \emph{context extension} functor $\Gamma_\bullet : \mathfrak{T}(\Gamma) \to \mathfrak{C}$ mapping each type $S$ to the context $\Gamma_\bullet(S)$, to be thought of as $\Gamma$ \emph{extended} with a fresh variable of type $S$.

\item For each substitution $f : \Gamma \xrightarrow[]{} \Delta \in \mathfrak{C}$, a natural transformation $f_\bullet$ with components $f_\bullet(S) : \Gamma_\bullet(f^*(S)) \xrightarrow[]{} \Delta_\bullet(S) \in \mathfrak{C}$ for all $S \in \mathfrak{T}(\Delta)$, which -- roughly speaking -- applies $f$ to turn $\Gamma$ \emph{under} an extension of $\Gamma$ substituted over $f$. We moreover require that the assignment of natural transformations $f_\bullet$ preserves identities and composites modulo the natural isomorphisms described above in item 3.

\item For each context $\Gamma \in \mathfrak{C}$, a natural transformation $\pi_\Gamma$ with components $\pi_\Gamma^S : \Gamma_\bullet(S) \xrightarrow[]{} \Gamma \in \mathfrak{C}$ for each $S \in \mathfrak{T}(\Gamma)$, to be thought of as \emph{projecting out} $\Gamma$ from an extension of $\Gamma$, such that for each substitution $f : \Gamma \xrightarrow[]{} \Delta \in \mathfrak{C}$ and type $S \in \mathfrak{T}(\Delta)$, the following square is a pullback: \[\begin{tikzcd}
	{\Gamma_\bullet(f^*(S))} & {\Delta_\bullet(S)} \\
	\small\Gamma & {\Delta}
	\arrow["{f_\bullet(S)}", from=1-1, to=1-2]
	\arrow["{\pi_\Gamma}"', from=1-1, to=2-1]
	\arrow["{f}"', from=2-1, to=2-2]
	\arrow["{\pi_\Delta}", from=1-2, to=2-2]
	\arrow["\lrcorner"{anchor=center, pos=0.125}, draw=none, from=1-1, to=2-2]
\end{tikzcd}\]
\end{enumerate}
\end{dfn}

A reader familiar with other categorical models of type theory, e.g. categories with families, may puzzle over the fact that our presentation involves interpreting terms as morphisms $f : S \xrightarrow[]{} T \in \mathfrak{T}(\Gamma)$, parameterized by both the context $\Gamma$ and an additional type $S$. In fact, this extra parameterization is crucial for generalizing the semantics to substructural dependent types and effectively separates the "type-level" and "term-level" aspects of contexts.

In a comprehension category, the projection maps $\pi_\Gamma^S$ induce "term-level weakening," whereby we may discard the extended part of a context. Substituting along these projection maps yields corresponding \emph{``type-level weakening''} functors $\omega_\Gamma^S : \mathfrak{T}(\Gamma) \to \mathfrak{T}(\Gamma_\bullet(S))$, which play a key role in defining additional type-theoretic structure on comprehension categories, specifically:

\begin{dfn} A comprehension category has \emph{Dependent Sums}:

\begin{enumerate}
\item if for each context $\Gamma \in \mathfrak{C}$ and type $S \in \mathfrak{T}(\Gamma)$, the weakening functor $\omega_\Gamma^S$ has a left adjoint $\Sigma_S : \mathfrak{T}(\Gamma_\bullet(S)) \to \mathfrak{T}(S)$, i.e. there is a natural bijection of homsets: $$\text{Hom}_{\mathfrak{T}(\Gamma_\bullet(S))}(T,\omega_\Gamma^S(R)) \cong \text{Hom}_{\mathfrak{T}(\Gamma)}(\Sigma_S(T), R)$$ \item and if the functors $\Sigma_S$ satisfy the \emph{Beck-Chevalley} condition, which essentially states that for any ${f : \Gamma \to \Delta \in \mathfrak{C}}$ along with $S \in \mathfrak{T}(\Delta)$ and $T \in \mathfrak{T}(\Delta_\bullet(S))$, there is a canonical isomorphism $f^*(\Sigma_S(T)) \simeq \Sigma_{f^*(S)} (f_\bullet(S)^*(T))$.
\end{enumerate}

\noindent\label{strongsums}A comprehension category with dependent sums as above moreover has \emph{strong sums} if the canonical substitution $$(\Gamma_\bullet(S))_\bullet(T) \xrightarrow[]{} \Gamma_\bullet(\Sigma_S(T)) \in \mathfrak{C}$$ is an isomorphism for all $\Gamma \in \mathfrak{C}$ with $S \in \mathfrak{T}(\Gamma)$ and $T \in \mathfrak{T}(\Gamma_\bullet(S))$.
\end{dfn}
From the first item in the above definition, one may deduce the usual rules for weak sum types -- and if the comprehension category has strong sums, one may likewise deduce the usual rules for strong sum types. The second item -- the Beck-Chevalley condition -- is necessary for substitution to behave as expected when substituting into dependent sums, i.e. that substituting into a dependent sum yields a dependent sum. More generally, such compositionality of substitution ought to be required of any additional type-theoretic structure we impose upon a comprehension category.

\emph{Dependent products} (i.e. dependent function types) in a comprehension category are similarly defined as \emph{right}-adjoints to type-level weakening that satisfy an analogous Beck-Chevalley condition. There is also a notion of \emph{unit types} in a comprehension category:

\begin{dfn}
A comprehension category has \emph{unit types} if, for each context $\Gamma \in \mathfrak{C}$, the category $\mathfrak{T}(\Gamma)$ has a terminal object $\top_\Gamma$, and the substitution functors preserve terminal objects. These unit types are \emph{strong} if the projection $\pi_\Gamma : \Gamma_\bullet(\top_\Gamma) \xrightarrow[]{} \Gamma \in \mathfrak{C}$ is an isomorphism for all contexts $\Gamma \in \mathfrak{C}$.
\end{dfn}

In all of the above cases, we establish type-theoretic structure in comprehension categories by way of certain \emph{universal properties}. This is evident even in the definition of a comprehension category itself, where context extension gains a universal property through projection maps forming pullback squares with substitutions. This aligns with the \emph{intuitionistic} nature of ordinary dependent type theory. By way of analogy, context extension in the intuitionistic theory of simple types is given the universal property of a product. One then obtains models for linear logic \& linear type theory -- i.e. (closed, symmetric) monoidal categories -- by replacing the universal \emph{property} of context extension with the \emph{structure} of a monoidal product that is unital and associative up to coherent isomorphism.

To obtain models of substructural dependent type theory, we may try replacing the universal property of context extension in comprehension categories with appropriately unital and associative \emph{structure}. However, a challenge arises in stating unitality and associativity for contexts, since context extension builds up contexts one element at a time. A primitive semantic notion of \emph{partial context} or \emph{telescope} \cite{debruijn} is seemingly needed to break up contexts for stating associativity. In ordinary models of dependent type theory, the role of such partial contexts is effectively played by \emph{strong sums} -- the condition for strong sums given in \textbf{Def. \ref{strongsums}} can be seen as implicitly capturing a kind of associativity between dependent sums and context extension. This suggests that the proper semantic setting for substructural dependent type theory ought to be a generalization of \emph{comprehension categories with strong sums} (and strong unit types) wherein the universal properties of both context extension and the dependent pair/unit types are discarded in favor of structure witnessing their mutual associativity and unitality. Such categorical structures, which I call left-fibred double categories (LFDCs), shall be my object of study for the remainder of this section.

\subsection{Left-Fibred Double Categories}

\begin{dfn}
An LFDC consists of the same data as items 1-5 of \textbf{Def. \ref{compcat}}, along with the following:

\begin{enumerate}
\item For each context $\Gamma \in \mathfrak{C}$ and type $S \in \mathfrak{T}(\Gamma)$, a \emph{dependent pair type functor} ${\oplus_S : \mathfrak{T}(\Gamma_\bullet(S)) \to \mathfrak{T}(\Gamma)}$.
\item For each substitution $f : \Gamma \to \Delta \in \mathfrak{C}$ and each term $g : f^*(S) \xrightarrow[]{} S' \in \mathfrak{T}(\Gamma)$, a natural transformation $\langle g,- \rangle$ with components $\oplus_{f^*(S)}(\Gamma_\bullet(g)^*(T)) \to \oplus_{S'}(T)$.
\item For each context $\Gamma \in \mathfrak{C}$, an \emph{unit type} object $\mathbbm{1}_\Gamma \in \mathfrak{T}(\Gamma)$.
\item Natural isomorphisms $\eta, \rho, \ell, \alpha, \beta, \gamma, \delta$ with components: \begin{enumerate}
    \item $\eta_\Gamma : \Gamma_\bullet(\mathbbm{1}_\Gamma) \simeq \Gamma$ for all $\Gamma \in \mathfrak{C}$
    \item $\rho_{\Gamma, S} : \oplus_S(\mathbbm{1}_{\Gamma_\bullet(S)}) \simeq S$ for all $\Gamma \in \mathfrak{C}$ and $S \in \mathfrak{T}(\Gamma)$
    \item $\ell_{\Gamma,S} : \oplus_{\mathbbm{1}_\Gamma}(\eta_\Gamma^*(S)) \simeq S$ for all $\Gamma \in \mathfrak{C}$ and $S \in \mathfrak{T}(\Gamma)$
    \item $\alpha_{\Gamma, S, T} : \Gamma_\bullet(\oplus_S(T)) \simeq (\Gamma_\bullet(S))_\bullet(T)$ for all $\Gamma \in \mathfrak{C}$ together with $S \in \mathfrak{T}(\Gamma)$ and $T \in \mathfrak{T}(\Gamma_\bullet(S))$
    \item $\beta_{\Gamma, S, T, R} : \oplus_S(\oplus_T(R)) \simeq \oplus_{\oplus_S(T)}(\alpha_{\Gamma,S,T}^*(R))$ for all\\ $\Gamma \in \mathfrak{C}, ~ S \in \mathfrak{T}(\Gamma)$ with $T \in \mathfrak{T}(\Gamma_\bullet(S))$ and $R \in \mathfrak{T}((\Gamma_\bullet (S))_\bullet T)$
    \item $\gamma_{\Gamma,\Delta,f,S,T} : f^*(\oplus_S(T)) \simeq \oplus_{f^*(S)}(f_\bullet(S)^*(T))$ for all $\Gamma, \Delta \in \mathfrak{C}$ with ${f : \Gamma \xrightarrow[]{} \Delta} \in\mathfrak{C}$ and $S \in \mathfrak{T}(\Delta)$ and $T \in \mathfrak{T}(\Delta_\bullet(S))$ (Beck-Chevalley for pair types)
    \item $\delta_{\Gamma,\Delta,f} : f^*(\mathbbm{1}_\Delta) \simeq \mathbbm{1}_\Gamma$ for all $\Gamma, \Delta \in \mathfrak{C}$ with ${f : \Gamma \xrightarrow[]{} \Delta} \in \mathfrak{C}$ (Beck-Chevalley for unit types)
\end{enumerate} subject to certain coherence laws.
\end{enumerate}
\end{dfn}
The above definition, up to and including item 4(e), is equivalently described as a \emph{pseudomonad} in a certain tricategory (specifically the tricategory whose objects are categories and whose 1-cells are spans of categories with one leg a fibration). This alternative definition makes clear that the above structure is a kind of \emph{double category}, with the property that its domain projection functor is a fibration, whence the name \emph{left-fibred double category}.

I am aware of only one place in the category-theoretic literature where structures such as these have been studied before, which is in recent work by David Jaz Myers \& Matteo Cappucci \cite{cappucci-myers}, as part of the two authors' work on Categorical Systems Theory, wherein they refer to such structures as ``dependent actegories.'' Suffice it to say that, to my knowledge, the use of LFDCs as models of type theory is novel to this paper.

\begin{dfn}
One additional piece of type-theoretic structure that has so-far been missing from the above exposition of LFDCs is some notion of \emph{empty context} from which to build other contexts. To this end, we have the following: a \emph{unit context} in an LFDC is a context $\epsilon \in \mathfrak{C}$ such that the context extension functor ${\epsilon_\bullet}$ is an equivalence of categories between $\mathfrak{T}(\epsilon)$ and $\mathfrak{C}$. We think of types $T \in \mathfrak{T}(\epsilon)$ as \emph{closed types}, such that contexts and closed types may be treated interchangeably. Hence we shall hereafter restrict our attention mainly to LFDCs equipped with a choice of unit context.
\end{dfn}

\begin{example} \label{syncats}
Any comprehension category with strong sums and strong unit types is inherently an LFDC, making all models of ordinary dependent type theory naturally LFDCs. Specifically, any intuitionistic dependent type theory with strong sum types and a unit type can be straightforwardly transformed into a comprehension category (the \emph{syntactic category} of the type theory) with strong sums/unit types (hence an LFDC) by quotienting terms up to $\alpha$-equivalence of $\beta\eta$-normal forms, and defining substitution, dependent pair types, etc., as their syntactic counterparts.
\end{example}

\begin{example} \label{moncat-lfdc}
The LFDC concept also encompasses categorical models of linear/ordered logic, specifically monoidal categories. In this context, a monoidal category $\mathcal{M}$ with monoidal product $\otimes : \mathcal{M} \times \mathcal{M} \to \mathcal{M}$ and unit $I \in \mathcal{M}$ is interpreted both as the category of contexts and as the category of types for each context $\Gamma \in \mathcal{M}$. The substitution functors are then simply the identity on $\mathcal{M}$, and $\Gamma_\bullet(S)$ and $\oplus_S(T)$ are interpreted as $\Gamma \otimes S$ and $S \otimes T$, respectively, with unit types interpreted as $I$.
\end{example}

Observe that in an LFDC based on a monoidal category $\mathcal{M}$, a term $t : S \xrightarrow[]{} T \in \mathfrak{T}(\Gamma)$ only trivially depends upon $\Gamma$, since substitution into $\Gamma$ is given by the identity, while substituting into the term-level context $S$ involves potentially non-trivial composition of morphisms in $\mathcal{M}$. LFDCs arising from monoidal categories thus exhibit non-trivial substructural properties, but trivial type dependency, whereas LFDCs arising from comprehension categories exhibit non-trivial type dependency, but trivial substructural properties. As an example of an LFDC with both non-trivial type-dependency and substructural behavior, we have the following:

\begin{example}
The category of linearly ordered sets (with either strictly or weakly order-preserving maps as morphisms) is naturally regarded as an LFDC, by interpreting both context extension and dependent pair types as the \emph{lexicographic sum} of linear orders. A type in context $(\Gamma, <_\Gamma)$ is defined as a family of linear orders $\{(S_x,<x)\}_{x \in \Gamma}$ indexed by the set $\Gamma$. Given such a family of linear orders, we define the context extension $(\Gamma,<_\Gamma)_\bullet(\{(S_x, <_x)\}_{x \in \Gamma})$ as the \emph{lexicographic sum} of the family, i.e. the set $\{(x,s) \mid x \in \Gamma, s \in S_x\}$ ordered such that $(x,s) < (x',s')$ iff either $x <_\Gamma x'$ or $x = x'$ and $s <_x s'$. We define the dependent pair type similarly as a \emph{family} of lexicographic sums, indexed by the underlying set of its context.
\end{example}

This example straightforwardly exhibits both non-trivial type dependency and substructutral properties -- on the one hand, \emph{any} family of linear orders indexed by the underlying set of another linear order counts as a dependent type in this setting, while on the other hand, the lexicographic product, which is an instance of the lexicographic sum over a constant family of linear orders, is notably non-symmetric, and so is distinct from the ordinary product of intuitionistic type theory.

Although the above argument is intuitively clear, we do not yet possess the appropriate structure on LFDCs to phrase such arguments in the internal language of LFDCs themselves -- e.g. we defined the lexicographic product of linear orders as the lexicographic sum over a constant family, i.e. a family not dependent upon its context; As it stands, the basic definition of LFDC given above poses no criterion for when a type may be considered \emph{independent} from (part of) its context. If we examine the above-given examples, we find that they all admit such a notion of \emph{independence} or \emph{type-level weakening}. It thus seems appropriate to augment the definition of an LFDC with such a notion of \emph{type-level weakening}, which is in fact critical to defining further type-theoretic structure on LFDCs.

\subsection{Type-Level Weakening}

\begin{dfn}
An LFDC has \emph{type-level weakening} if it is additionally equipped with:

\begin{enumerate}
\item A functor $\omega_\Gamma^S : \mathfrak{T}(\Gamma) \to \mathfrak{T}(\Gamma_\bullet(S))$ for each context $\Gamma \in \mathfrak{C}$ and type $S \in \mathfrak{T}(\Gamma)$
\item A functor $\Omega_\Gamma^{S,T} : \mathfrak{T}(\Gamma_\bullet T) \to \mathfrak{T}((\Gamma_\bullet S)_\bullet \omega_\Gamma^S(T))$ for each context $\Gamma \in \mathfrak{C}$ and types $S,T \in \mathfrak{T}(\Gamma)$
\item Subject to certain natural isomorphisms and coherence laws thereupon that ensure the compatibility of the above-given functors with each other and with the structure of the ambient LFDC. In particular, there are natural isomorphisms $\zeta$ and $\theta$ with components: \begin{itemize} \item $\zeta_{f,g,T} : \Gamma_\bullet(g)^*(\omega_\Gamma^{S'}(T)) \simeq \omega^{f^*(S)}_\Gamma(T)$ \item $\theta_{f,g,T,R} : \Gamma_\bullet(g)_\bullet(\omega_{\Gamma}^{S'}(T))^*(\Omega_{\Gamma}^{S',T}(R)) \simeq \zeta_{f,g,T}^*(\Omega_{\Gamma}^{f^*(S),T}(R))$ \end{itemize} for all $\Gamma, \Delta \in \mathfrak{C}$ with $f : \Gamma \to \Delta \in \mathfrak{C}$ and $S',T,R \in \mathfrak{T}(\Gamma)$ and $S \in \mathfrak{T}(\Delta)$ along with $g : f^*(S) \xrightarrow[]{} S' \in \mathfrak{T}(\Gamma)$. These isomorphisms effectively witness that a weakened type \emph{does not} depend upon the type added to its context by weakening. We also require natural isomorphisms $\xi, \chi$ with components \begin{itemize} \item $\xi_{\Gamma,S} : \omega_\Gamma^S(\mathbbm{1}_\Gamma) \simeq \mathbbm{1}_{\Gamma_\bullet(S)}$ \item $\chi_{\Gamma,S,T,R} : \omega_\Gamma^S(\oplus_T(R)) \simeq \oplus_{\omega_\Gamma^S(T)}(\Omega_{\Gamma}^{S,T}(R))$ \end{itemize} for all $\Gamma \in \mathfrak{C}$ with $S,T \in \mathfrak{T}(\Gamma)$ and $R \in \mathfrak{T}(\Gamma_\bullet(T))$. These isomorphisms ensure that weakening behaves as we would expect with regard to the type formers available in the ambient LFDC, i.e. the weakening of a unit type is itself (up to isomorphism) a unit type, and likewise for dependent pair types. When positing additional type formers for an LFDC with type-level weakening, we shall therefore generally require these to come equipped with analogous natural isomorphisms to ensure their compatibility with the type-level weakening structure, along with the usual Beck-Chevalley isomorphisms.
\end{enumerate}
\end{dfn}

Analogous to the definition of the lexicographic product as the lexicographic sum of a constant family of linear orders, we then have the following in any LFDC with type-level weakening:

\begin{dfn} \label{weak-mon}
In an LFDC with type-level weakening, each category of types $\mathfrak{T}(\Gamma)$ for each context $\Gamma$ is naturally equipped with the structure of a monoidal category whose unit object is $\mathbbm{1}_\Gamma$, where the monoidal product $S \otimes_\Gamma T$ of $S, T \in \mathfrak{T}(\Gamma)$ is defined as the dependent pair type $\oplus_S(\omega_\Gamma^S(T))$.
\end{dfn}

Given type-level weakening functors and the associated monoidal products on an LFDC, it becomes straightforward to define function types as suitable right adjoints to these. Since LFDCs distinguish type-level and term-level contexts, we should expect there to correspondingly be notions of both \emph{term-level} and \emph{type-level} function types in an LFDC. We thus have the following definitions:

\begin{dfn}
An LFDC with type-level weakening has \emph{term-level function types} if for each context $\Gamma \in \mathfrak{C}$ with $S \in \mathfrak{T}(\Gamma)$, the functors $- \otimes_\Gamma S$ and $S \otimes_\Gamma -$ have right adjoints $\sslash_S$ and $\bbslash_S$, respectively, with the associated natural bijections of homsets: $$
\small \sslash_{\Gamma,S,T,R} : \text{Hom}_{\mathfrak{T}(\Gamma)}(R \otimes S, T) \cong \text{Hom}_{\mathfrak{T}(\Gamma)}(R, \sslash_S(T))$$ $$ \bbslash_{\Gamma,S,T,R} :
\text{Hom}_{\mathfrak{T}(\Gamma)}(S \otimes R, T) \cong \text{Hom}_{\mathfrak{T}(\Gamma)}(R, \bbslash_S(T))
$$ We additionally require the functors $\sslash_S$ and $\bbslash_S$ to satisfy a Beck-Chevalley condition and compatibility with type-level weakening in the form of canonical isomorphisms $$f^*(\sslash_S(T)) \simeq \sslash_{f^*(S)}(f^*(T)) \quad f^*(\bbslash_S(T)) \simeq \bbslash_{f^*(S)}(f^*(T))$$ $$\omega_\Gamma^R(\sslash_S(T)) \simeq \sslash_{\omega_\Gamma^R(S)}(\omega_\Gamma^R(T)) \quad \omega_\Gamma^R(\bbslash_S(T)) \simeq \bbslash_{\omega_\Gamma^R(S)}(\omega_\Gamma^R(T))$$
etc.
\end{dfn}

\begin{dfn}
an LFDC with type-level weakening has \emph{type-level function types} if for each context $\Gamma \in \mathfrak{C}$ with $S \in \mathfrak{T}(\Gamma)$, the weakening functor $\omega_\Gamma^S : \mathfrak{T}(\Gamma) \to \mathfrak{T}(\Gamma_\bullet(S))$ has a right adjoint $\forall_S$, with corresponding natural bijection of homsets $$
\Lambda_{\Gamma,S,T,R} : \text{Hom}_{\mathfrak{T}(\Gamma_\bullet(S))}(\omega_\Gamma^S(R),T) \cong \text{Hom}_{\mathfrak{T}(\Gamma)}(R,\forall_S(T))
$$ As in the above definition, we also require the functors $\forall_S$ to be compatible with substitution and type-level weakening in that there are canonical isomorphisms $$f^*(\forall_S(T)) \simeq \forall_{f^*(S)}(f_\bullet(S)^*(T)) \quad \omega_\Gamma^R(\forall_S(T)) \simeq \forall_{\omega_\Gamma^R(S)}(\Omega_\Gamma^{R,S}(T))$$
\end{dfn} \noindent and so on. In an LFDC with a unit context $\epsilon$ and both term-level and type-level function types, ``open'' terms $S \xrightarrow[]{} T \in \mathfrak{T}(\Gamma)$ are equivalent to \emph{closed} terms of type $\mathbbm{1}_\epsilon \to_{\mathfrak{T}(\epsilon)}\forall_{\epsilon_\bullet^{-1}(\Gamma)}(\sslash_S(T))$ (where $\epsilon_\bullet^{-1}$ is the inverse to context extension by $\epsilon$). Such LFDCs thus fully internalize their logic of term formation.

Moving on to \emph{structural properties} of LFDCs with type-level weakening, among the most important of these is \emph{exchange}, which allows permuting independent variables in contexts. Having defined the monoidal structure on types arising from type-level weakening, we are now in a position to state this property for LFDCs in general:

\begin{dfn} \label{exchange}
an LFDC with type-level weakening has \emph{exchange} if the mon\-oidal structure defined on each category of types $\mathfrak{T}(\Gamma)$ in \textbf{Def. \ref{weak-mon}} additionally carries the structure of a \emph{symmetric monoidal category}, i.e. a natural isomorphism $\sigma$ with components $$\sigma_{\Gamma,S,T} : S \varotimes_\Gamma T \simeq T \varotimes_\Gamma S$$ for all $S,T \in \mathfrak{T}(\Gamma)$, subject to certain coherence conditions.
\end{dfn}

Beyond exchange, there is also \emph{term-level weakening}, which allows discarding variables in the term-level context. This property may be defined for \emph{any} LFDC, with or without type-level weakening, and in fact equips the LFDC with a natural form of type-level weakening.

\begin{dfn} \label{term-weak}
An LFDC has \emph{term-level weakening} if for each context $\Gamma$, the type $\mathbbm{1}_\Gamma$ is a terminal object in the category $\mathfrak{T}(\Gamma)$. For any context $\Gamma$, write $\top_S : S \xrightarrow[]{} \mathbbm{1}_\Gamma \in \mathfrak{T}(\Gamma)$ for the unique morphism from any type $S \in \mathfrak{T}(\Gamma)$ to $\mathbbm{1}_\Gamma$.

Any LFDC with term-level weakening is naturally equpped with \emph{type-level weakening}, given by substitution along terminal morphisms. Specifically, given a context $\Gamma \in \mathfrak{C}$ and types $S,T \in \mathfrak{T}(\Gamma)$ and $R \in \mathfrak{T}(\Gamma_\bullet(T))$, we may define $$
\omega_\Gamma^S(T) = \Gamma_\bullet(\top_S)^*(\eta_\Gamma^*(T))$$ and $$\Omega_\Gamma^{S,T}(R) = (\alpha^{-1})^*(\Gamma_\bullet(\langle \top_S, - \rangle)^*(\ell_{\Gamma,T}^*(R)))$$

\noindent We may likewise define the following \emph{projection maps}: $$
\varoplus_S(T) \xrightarrow{\varoplus_S(\top_T)} \varoplus_S(\mathbbm{1}_{\Gamma_\bullet(S)}) \xrightarrow{\rho_{\Gamma,S}} S
$$ $$
\varoplus_S(\omega_\Gamma^S(T)) \xrightarrow{\langle \top_S, - \rangle} \varoplus_{\mathbbm{1}_\Gamma}(\eta^*_\Gamma(T)) \xrightarrow{\ell_{\Gamma,T}} T
$$ where $\omega_\Gamma^S(T)$ is as defined above. The latter projection induces a family of natural transformations $$\text{Hom}_{\mathfrak{T}(\Gamma_\bullet(S))}(T,\omega_\Gamma^S(R)) \to \text{Hom}_{\mathfrak{T}(\Gamma)}(\varoplus_S(T),R)$$ by mapping $f : T \to \omega_\Gamma^S(R) \in \mathfrak{T}(\Gamma_\bullet(S))$ to the composite $$
\varoplus_S(T) \xrightarrow{\varoplus_S(f)} \varoplus_S(\omega_\Gamma^S(R)) \xrightarrow{\langle \top_S, - \rangle} \varoplus_{\mathbbm{1}_\Gamma}(\eta_\Gamma^*(R)) \xrightarrow{\ell_{\Gamma,R}} R
$$ In this sense, type-level weakening in such an LFDC is ``halfway'' to forming an adjunction with the dependent pair type. This motivates the following definition of \emph{contraction} in an LFDC with type-level weakening as forming the other half of such an adjunction.
\end{dfn}

\begin{dfn} \label{contract}
An LFDC with type-level weakening has \emph{contraction} if there is a natural transformation of homsets: $$
\text{Hom}_{\mathfrak{T}(\Gamma)}(\varoplus_S(T),R) \to \text{Hom}_{\mathfrak{T}(\Gamma_\bullet(S))}(T,\omega_\Gamma^S(R))
$$ for all contexts $\Gamma \in \mathfrak{C}$ and types $S, R \in \mathfrak{T}(\Gamma)$ and $T \in \mathfrak{T}(\Gamma_\bullet(S))$, that is suitably compatible with the structure of the LFDC.

By a Yoneda-style argument, the above is equivalent to the existence of a natural family of morphisms $$\mathfrak{d}_{\Gamma,S} : \mathbbm{1}_{\Gamma_\bullet(S)} \xrightarrow[]{} \omega_\Gamma^S(S) \in \mathfrak{T}(\Gamma_\bullet(S))$$ for all contexts $\Gamma \in \mathfrak{C}$ with $S \in \mathfrak{T}(\Gamma)$, suitably compatible with the LFDC structure. In this sense, contraction allows for the term-level use of type-level variables. This makes sense if we think of type-level contexts in LFDCs as consisting of variables that are treated as \emph{already having been used elsewhere}.

It follows that in an LFDC with contraction, for each context $\Gamma \in \mathfrak{C}$, there is a natural transformation $\Delta$ with components $\Delta_S : S \xrightarrow[]{} S \otimes_\Gamma S \in \mathfrak{T}(\Gamma)$ for all types $S \in \mathfrak{T}(\Gamma)$, defined as follows: $$
S \xrightarrow{\rho_{\Gamma,S}^{-1}} \varoplus_S(\mathbbm{1}_{\Gamma_\bullet(S)}) \xrightarrow{\varoplus_S( \mathfrak{d}_{\Gamma_\bullet(S),S})} \varoplus_S(\omega_\Gamma^S(S)) = S \otimes_\Gamma S
$$ If we think of $\Delta_S$ as \emph{duplicating} its input, then this suffices to show that the above-defined notion of contraction for LFDCs gives rise to the usual notion of contraction as \emph{copying} of variables.
\end{dfn}

We may then define a \emph{Cartesian LFDC} as one which has both term-level weakening and contraction in a compatible way, specifically:

\begin{dfn} \label{cart}
An LFDC is \emph{Cartesian} if 1) it has term-level weakening, and 2) the natural transformations of homsets defined in \textbf{Def. \ref{term-weak}} are all invertible, i.e. the type-level weakening functors $\omega_\Gamma^S$ are right-adjoint to the dependent pair type functors $\varoplus_S$.

Moreover, any Cartesian LFDC is naturally equipped with \emph{exchange}, since for any context $\Gamma \in \mathfrak{C}$ and types $S,T \in \mathfrak{T}(\Gamma)$, we may define a morphism $\sigma : S \otimes_\Gamma T \to T \otimes_\Gamma S$ as follows: $$ S \otimes_\Gamma T \xrightarrow{\Delta_{S \otimes T}} \begin{array}{c} (S \otimes T)\\ \otimes\\ (S \otimes T) \end{array} \xrightarrow{(\top_S \otimes T) \otimes (S \otimes \top_T)} \begin{array}{c} (\mathbbm{1}_\Gamma \otimes T)\\ \otimes\\ (S \otimes \mathbbm{1}_\Gamma) \end{array} \simeq T \otimes S $$ where $\Delta$ is as defined in \textbf{Def. \ref{contract}}. One then checks that $\sigma$ is an isomorphism satisfying all the properties given in \textbf{Def. \ref{exchange}}.

More generally, the monoidal product $\otimes_\Gamma$ on each category of types $\mathfrak{T}(\Gamma)$ in a Cartesian LFDC satisfies the universal property of a \emph{product}. Even if the given LFDC is not Cartesian, it may have such products in its type-theoretic structure, without these necessarily coinciding with the monoidal product, as follows:
\end{dfn}

\begin{dfn}
An LFDC has \emph{product types} if: \begin{enumerate}
\item For each $\Gamma \in \mathfrak{C}$, there is a functor $\times_\Gamma : \mathfrak{T}(\Gamma) \times \mathfrak{T}(\Gamma) \to \mathfrak{T}(\Gamma)$ together with natural transformations $\pi_1, \pi_2$ with components $\pi_1^{S,T} : S \times_\Gamma T \to S$ and $\pi_2^{S,T} : S \times_\Gamma T \to T$, such that for all types $S,T,R \in \mathfrak{T}(\Gamma)$ with morphisms $f : R \xrightarrow[]{} S \in \mathfrak{T}(\Gamma)$ and $g : R \xrightarrow[]{} T \in \mathfrak{T}(\Gamma)$, there exists a unique morphism $(f,g) : R \xrightarrow[]{} S \times_\Gamma T \in \mathfrak{T}(\Gamma)$ which factors $f$ and $g$ through $\pi_1^{S,T}$ and $\pi_2^{S,T}$, respectively.
\item The substitution functors $f^*$ are all finite-product-preserving, and if the LFDC has type-level weakening, then similarly the type-level weakening functors preserve finite products.
\end{enumerate}

We think of the product $S \times_\Gamma T$ of two types $S,T \in \mathfrak{T}(\Gamma)$ as offering a \emph{choice} between an element of $S$ and an element of $T$, while the monoidal product $S \otimes_\Gamma T$ offers two such elements at once.
\end{dfn}

\subsection{Strictness}

Concluding the discussion on LFDCs, I now shift my attention to their role as models of type theory in particular. While the class of LFDCs has so far been defined as to include various mathematical examples, this definition is in a sense still too \emph{weak}. Specifically, we have so far required substitution and type-level weakening to satisfy their requisite identities only up to coherent isomorphism, when from the perspective of type theory, these should hold on the nose. For this purpose, I introduce the following definition:

\begin{dfn}
An LFDC is \emph{strict} if:
\begin{enumerate}
\item The assignment of substitution functors $f^*$ \emph{strictly} preserves identities and composites of substitutions.
\item The Beck-Chevalley isomorphisms $\gamma, \delta$ are identities, i.e. substitution \emph{strictly} preserves unit and dependent pair types.
\end{enumerate}

\noindent For LFDCs with type-level weakening, the natural isomorphisms associated with type-level weakening must also be strict. Similarly, substitution and type-level weakening in LFDCs with function types / products must strictly preserve the associated type formers.
\end{dfn}

\begin{example} \label{synmodels}
Most of the above-considered examples of LFDCs fail to be strict. However, one significant class of LFDCs manages to meet these requirements, which is the class of \emph{syntactic categories} of dependent type theory as in \textbf{Ex. \ref{syncats}}. Because (type-level) weakening in such models is given by inclusions of terms into extended contexts, and because substitution is essentially computed by applying the prescribed identities, it follows that these models are in fact strict.
\end{example}

\section{The Type Theory of LFDCs}

What is a type theory? Standardly, one might answer that a type theory is a definition of some syntax along with rules for forming judgments upon this syntax, and perhaps also for computing therewith. As an alternative, I wish to pose the view that a type theory is a system of translation between syntax and semantics. In the ordinary approach to type theory, the computational dynamics form an elementary semantics for the theory, defined e.g. in terms of equivalence classes of terms modulo $\alpha$-equivalence of normal forms, etc. Yet this may be generalized -- provided we have a procedure for translating well typed terms of the theory into some semantic domain, we need not restrict ourselves to semantics that are so-dependent upon the syntax of the theory itself. Instead, we may define a type theory parameterized by a class of models in which to interpret it, and then search among this class for a model that yields good computational behavior. This allows for a separation of concerns between the design of the syntax of the type theory and the assurance of its computational adequacy. \footnote{This perspective on type theories also generalizes the more modern approach of specifying the syntax and rules of a type theory within a logical framework using higher-order abstract syntax (HOAS). In that case, the semantic domain in which the type theory is interpreted is simply the ambient logical framework itself; although this offers significant advantages over the tradtional approach to the specification of type theories, it is not suitable for our present purposes of defining \emph{substructural} dependent type theory, since any type theory specified via HOAS in an intuitionistic logical framework will allow for intuitionistic use of variables in type families -- precisely what we wish to avoid in general. Hence I have been compelled to seek a suitable abstraction of this approach, which has led me to the above-described perspective on type theories, and the categorical semantics of substructural type theory sketched in the prior section.}

Such separation of concerns is particularly useful in the design of \emph{dependent} type theory. Typically in dependent type theory, checking whether a term belongs to a given type may require performing some amount of \emph{computation} within the type. Thus, type checking of terms depends upon \emph{semantic information} about the types. This suggests that, in a typing judgment such as $t : T$, the type $T$ ought to be an object drawn from the semantics of the theory, while the term $t$ (the subject of the judgment) is a syntactic representation of an inhabitant of $T$. The rules for checking such a judgment can then be designed in such a way as to implicitly \emph{compute} the semantic denotation of $t$. This is moreover in line with the discipline of \emph{bidirectional type-checking}, wherein the judgments of a type theory are conceived not merely as predicates, but as \emph{procedures} for computing certain information about terms/types. I hence adopt a bidirectional approach to the type theory I present herein, where every judgment computes the semantic denotation of its subject. 

Because we have taken care in the previous section to identify the type-theoretic structure of LFDCs (with type-level weakening, function types, etc.), it is largely straightforward to give a type-theoretic syntax for this structure, as I now illustrate.

\subsection{Expressions, Denotations, Signatures \& Judgments}

\begin{dfn} Assume a countably infinite stock of \emph{variables} -- which I shall write as $a,b,c,d,w,x,y,z$ -- and countably infinite supplies of \emph{type family symbols} and $\emph{constant symbols}$ -- written $P,Q$ and $p,q$, respectively. We then have the following grammar of syntactic \emph{expressions}:$$
\begin{array}{rclll}
\text{Contexts} ~ \Gamma,\Delta,\Theta,\Phi & ::= & \epsilon & \mid & \Gamma, x : S
\end{array} 
$$ $$
\begin{array}{rclllllll}
\text{Types} ~ S,T,R,U,V & ::= & \mathbbm{1}
& \mid & \bigoplus x : S . T\\
& \mid & S \sslash  T
& \mid & T \bbslash S\\
& \mid & \forall x : S . T
& \mid & S \times T
& \mid & Ps
\end{array}
$$ $$
\begin{array}{rclllllll}
\text{Introduction Forms} ~ s,t,r,u,v & ::= & \langle \rangle
& \mid & \langle s,t \rangle\\
& \mid & \sslash x . t
& \mid & \bbslash x . t\\
& \mid & \Lambda x . t
& \mid & (s,t)
& \mid & \underline{e}
\end{array}$$ $$\begin{array}{l}
\text{Elimination Forms} ~ e,f\\
\begin{array}{cl} 
::= &
\mathsf{let}\left[_{R}\right] ~ \langle \rangle = e ~ \mathsf{in} ~ r\\ 
\mid & 
\mathsf{let}\left[a._R^{U}\right] ~ \langle \rangle = e ~ \mathsf{with} ~ w = u ~ \mathsf{in} ~ r\\
\mid & \mathsf{let}\left[a,b._{R}^{U}\right] ~ \langle \rangle = e ~ \mathsf{with} ~ w = u ~ \mathsf{in} ~ r\\
\mid & \mathsf{let}\left[a,b,c._{R}^{U, ~ V}\right] ~ \langle \rangle = e ~ \mathsf{with} ~ w = u ~ \mathsf{and} ~ z = v ~ \mathsf{in} ~ r\\
\mid & \mathsf{let}\left[a._{R}^{U} \right] ~ \langle x,y \rangle = e ~ \mathsf{with} ~ w = u ~ \mathsf{in} ~ r\\
\mid & \mathsf{let}\left[a,b,c._{R}^{U, ~ V} \right] ~ \langle x,y \rangle = e ~ \mathsf{with} ~ w = u ~ \mathsf{and} ~ z = v ~ \mathsf{in} ~ r\\
\mid & \begin{array}{lllllllll}
f \triangleleft s 
& \mid & s \triangleright f 
& \mid & f \cdot s
& \mid & \pi_1(e)
& \mid & \pi_2(e)
\end{array}\\
\mid & \begin{array}{llllll}
t : T & \mid & x & \mid & p
\end{array}
\end{array}
\end{array}
$$ \end{dfn}

\begin{dfn}
Fix a strict LFDC with type-level weakening, function types, and products. For each syntactic sort of expressions defined above, we then define a corresponding class of \emph{denotations}:

\begin{itemize}
\item \textbf{Contexts}: the denotation of a context $\Gamma$ is a list $\llparenthesis \Gamma \rrparenthesis$ of type annotations $x : \llbracket S \rrbracket$, which is well-formed if each such $\llbracket S \rrbracket$ is a type in the context corresponding to the list of annotations preceding $x$. We then have the following definition of the semantic context $\llbracket \llparenthesis \Gamma \rrparenthesis \rrbracket$ represented by $\llparenthesis \Gamma \rrparenthesis$ $$\llbracket \epsilon \rrbracket = \epsilon \qquad \llbracket \llparenthesis \Gamma \rrparenthesis, x : \llbracket S \rrbracket \rrbracket = \llbracket \llparenthesis \Gamma \rrparenthesis \rrbracket_\bullet(\llbracket S \rrbracket)$$ I write $\llparenthesis \Gamma \rrparenthesis, \llparenthesis \Delta \rrparenthesis$ (or $\llparenthesis \Gamma \rrparenthesis \fatsemi \llparenthesis \Delta \rrparenthesis$ when $\llparenthesis \Gamma \rrparenthesis$ is understood to be the type-level context, and $\llp{\Delta}$ the term-level context), to mean that 1) $\llparenthesis \Gamma \rrparenthesis$ is well-formed, and 2) the concatenation of $\llparenthesis \Delta \rrparenthesis$ onto the right-hand side of $\llparenthesis \Gamma \rrparenthesis$ is also well-formed, and say that $\llparenthesis \Delta \rrparenthesis$ is a \emph{telescope} for $\llparenthesis \Gamma \rrparenthesis$. We then have the following recursive definition of the semantic \emph{type} $\llbracket \llparenthesis \Delta \rrparenthesis \rrbracket_{\llparenthesis \Gamma \rrparenthesis}$ represented by a telescope $\llparenthesis \Delta \rrparenthesis$ for a semantic context $\llparenthesis \Gamma \rrparenthesis$: $$\qquad \llbracket \epsilon \rrbracket_{\llparenthesis \Gamma \rrparenthesis} = \mathbbm{1}_{\llbracket \llparenthesis \Gamma \rrparenthesis \rrbracket} \quad \llbracket x : \llbracket S \rrbracket, \llparenthesis \Delta \rrparenthesis \rrbracket_{ \llparenthesis \Gamma \rrparenthesis } = \varoplus_{\llbracket S \rrbracket}(\llbracket \llparenthesis \Delta \rrparenthesis \rrbracket_{ \llparenthesis \Gamma \rrparenthesis, x : \llbracket S \rrbracket})$$
\item \textbf{Types}: given a semantic context $\llparenthesis \Gamma \rrparenthesis$, the denotation of a type $S$ wrt $\llparenthesis \Gamma \rrparenthesis$ is defined as an object $\llbracket S \rrbracket_{\llparenthesis \Gamma \rrparenthesis} \in \mathfrak{T}(\llbracket \llparenthesis \Gamma \rrparenthesis \rrbracket)$.
\item \textbf{Introduction \& Elimination Forms:} given a semantic context $\llparenthesis \Gamma \rrparenthesis$ and a telescope $\llparenthesis \Gamma \rrparenthesis \fatsemi \llparenthesis \Delta \rrparenthesis$ together with a semantic type $\llbracket S \rrbracket_{ \llparenthesis \Gamma \rrparenthesis}$, the denotation of an introduction/elimination form $\digamma$ wrt $\llparenthesis \Gamma \rrparenthesis \fatsemi \llparenthesis \Delta \rrparenthesis$ and $\llbracket S \rrbracket_{\llbracket \llparenthesis \Gamma \rrparenthesis \rrbracket}$ is defined as a morphism $\llbracket \digamma \rrbracket^{\llparenthesis \Gamma \rrparenthesis \fatsemi \llparenthesis \Delta \rrparenthesis}_{\llbracket S \rrbracket} : \llbracket \llparenthesis \Delta \rrparenthesis \rrbracket_{ \llparenthesis \Gamma \rrparenthesis } \xrightarrow[]{} \llbracket S \rrbracket_{ \llparenthesis \Gamma \rrparenthesis } \in \mathfrak{T}(\llbracket \llparenthesis \Gamma \rrparenthesis \rrbracket)$.
\end{itemize}
\end{dfn}

\begin{dfn}
In order to capture structure of the LFDC in which the type theory is interpreted, we allow for the type theory to additionally be parameterized by a \emph{signature} of \emph{constant terms} and \emph{atomic type families}. Such a signature $\Sigma$ is defined to be a \emph{set} of \emph{bindings}, where each binding has one of the following forms: \begin{itemize} \item $P(\llbracket S \rrbracket) \mapsto \llbracket P \rrbracket$, where $\llbracket S \rrbracket \in \mathfrak{T}(\epsilon)$ is a closed semantic type, and $\llbracket P \rrbracket \in \mathfrak{T}(\epsilon_\bullet(\llbracket S \rrbracket))$ is a type family dependent upon $\llbracket S \rrbracket$.
\item $p : \llbracket S \rrbracket \mapsto \llbracket p \rrbracket$, where $\llbracket S \rrbracket \in \mathfrak{T}(\epsilon)$ is a closed semantic type, and $\llbracket p \rrbracket : \mathbbm{1}_\epsilon \xrightarrow[]{} \llbracket S \rrbracket \in \mathfrak{T}(\epsilon)$ is a closed term of type $\llbracket S \rrbracket$.
\end{itemize}
\end{dfn}

\begin{dfn} \label{judgmental}
We conceive of judgments as procedures taking \emph{inputs} (here written in \textcolor{blue}{blue}) and yielding \emph{outputs} (here written in \textcolor{red}{red}), with one input of each judgment designated the \emph{subject} (written in \textcolor{violet}{violet}). I follow McBride \cite{mcbride} in writing inputs to a judgment to the left of the subject, and outputs to the right, using $\ni$ and $\in$ for the judgments in which types are checked and inferred, respectively. The four main judgments of the type theory are then as follows: $$
\textcolor{violet}{\Gamma} ~ \mathsf{Ctx} \gg \textcolor{red}{\llparenthesis \Gamma \rrparenthesis} \qquad \textcolor{blue}{\llparenthesis \Gamma \rrparenthesis} \vdash \textcolor{violet}{T} ~ \mathsf{Type} \gg \textcolor{red}{\llbracket T \rrbracket_{\llparenthesis \Gamma \rrparenthesis}}
$$ $$
\textcolor{blue}{\llparenthesis \Gamma \rrparenthesis \fatsemi \llparenthesis \Delta \rrparenthesis} \vdash \textcolor{blue}{\llbracket T \rrbracket_{ \llparenthesis \Gamma \rrparenthesis}} \ni \textcolor{violet}{t} \gg \textcolor{red}{\llbracket t \rrbracket^{\llparenthesis \Gamma \rrparenthesis \fatsemi \llparenthesis \Delta \rrparenthesis}_{\llbracket T \rrbracket}}$$ $$ \textcolor{blue}{\llparenthesis \Gamma \rrparenthesis \fatsemi \llparenthesis \Delta \rrparenthesis} \vdash \textcolor{violet}{e} \in \textcolor{red}{\llbracket R \rrbracket_{\llparenthesis \Gamma \rrparenthesis}} \gg \textcolor{red}{\llbracket e \rrbracket^{\llparenthesis \Gamma \rrparenthesis \fatsemi \llparenthesis \Delta \rrparenthesis}_{\llbracket R \rrbracket}}
$$ Observe that within these judgments all inputs/outputs other than the subject are \emph{semantic objects}. Hence it is only ever the subject of a judgment which requires checking, and all dependence of type checking upon semantic information about types/contexts/etc. is suitably encoded in the structure of these judgments.
\end{dfn}

\begin{note}
From this point on, I will generally omit the sub/superscripts from $\llbracket S \rrbracket_{\llparenthesis \Gamma \rrparenthesis}$ and $\llbracket s \rrbracket^{\llparenthesis \Gamma \rrparenthesis \fatsemi \llparenthesis \Delta \rrparenthesis}_{\llbracket S \rrbracket}$, as these are readily inferred from their occurrence in the above judgments. Similarly, I will omit sub/super-\\scripts from natural transformations where these may inferred.
\end{note}

The reader familiar with traditional expositions of type theory may notice a conspicuous lack of a judgment for \emph{equality} among those given above. In fact, there is no need for such a judgment, because we have defined the above judgments to compute the semantic denotations of their subjects -- we thus may simply consider types $S,T$ judgmentally equal whenever $\llb{S} = \llb{T}$, introduction forms $s,t$ judgmentally equal whenever $\llb{s} = \llb{t}$, etc.

\begin{dfn}
Before proceeding with the rules of the type theory, I first note some useful constructions on semantic contexts / types / terms (see Appendix A for full definitions):
\begin{itemize}
\item \emph{Functoriality of telescopes}: given a telescope $\llparenthesis \Gamma \rrparenthesis, \llparenthesis \Delta \rrparenthesis$, there is an associated functor $\varoplus_{\llparenthesis \Delta \rrparenthesis} : \mathfrak{T}(\llbracket \llparenthesis \Gamma \rrparenthesis, \llparenthesis \Delta \rrparenthesis \rrbracket) \to \mathfrak{T}(\llbracket \llparenthesis \Gamma \rrparenthesis \rrbracket)$ given by iterative formation of dependent pair types.
\item \emph{Context substitution/weakening}: given a telescope $\llp{\Delta},\llp{\Theta}$ and substitution $f : \llb{\llp{\Gamma}} \to \llb{\llp{\Delta}}$, there is a telescope $\llp{\Gamma}, f^*(\llp{\Theta})$ given by substituting into all types in $\llp{\Theta}$. Similarly, for a telescope $\llp{\Gamma}, \llp{\Delta}$ and type $\llb{S} \in \mathfrak{T}(\llb{\llp{\Gamma}})$, there is a telescope $\llp{\Gamma}, x : \llb{S}, \omega_{\llp{\Gamma}}^{\llb{S}}(\llp{\Delta})$ given by weakening all types in $\llp{\Delta}$.
\item \emph{Reassociation}: for each telescope $\llparenthesis \Gamma \rrparenthesis, \llparenthesis \Delta \rrparenthesis$ there is an isomorphism ${\bullet}\mathsf{asc}_{\llparenthesis \Gamma \rrparenthesis}^{\llparenthesis \Delta \rrparenthesis} : \llbracket \llparenthesis \Gamma \rrparenthesis, \llparenthesis \Delta \rrparenthesis \rrbracket \simeq  \llbracket \llparenthesis \Gamma \rrparenthesis \rrbracket_\bullet(\llbracket \llparenthesis \Delta \rrparenthesis \rrbracket_{\llparenthesis \Gamma \rrparenthesis})$. Similarly, for each telescope $\llparenthesis \Gamma \rrparenthesis, \llparenthesis \Delta \rrparenthesis$ and type $\llb{S} \in \mathfrak{T}(\llb{\llp{\Gamma}, \llp{\Delta}})$ there is an isomorphism $$\qquad \quad {\oplus}\mathsf{asc}_{\llp{\Gamma}}^{\llp{\Delta} \mid \llb{S}} : \varoplus_{\llp{\Delta}}(\llb{S}) \simeq \varoplus_{\llb{\llp{\Delta}}}((({\bullet}\mathsf{asc}_{\llp{\Gamma}}^{\llp{\Delta}})^{-1})^*(\llb{S}))$$
\item \emph{Term substitution}: given a term $\llbracket s \rrbracket : \llbracket \llparenthesis \Delta \rrparenthesis \rrbracket \xrightarrow[]{} \llbracket S \rrbracket \in \mathfrak{T}(\llbracket \llparenthesis \Gamma \rrparenthesis \rrbracket)$, there is a substitution ${\downarrow}(\llbracket s \rrbracket) : \llbracket \llparenthesis \Gamma \rrparenthesis, \llparenthesis \Delta \rrparenthesis \rrbracket \to \llbracket \llparenthesis \Gamma \rrparenthesis \rrbracket_\bullet(\llbracket S \rrbracket)$, and given a term $\llb{e} : \llb{\llp{\Delta}} \xrightarrow[]{} \llb{S} \in \mathfrak{T}(\llb{\llp{\Gamma}})$ and a telescope $\llp{\Gamma}, x : \llb{S}, \llp{\Theta}$, there is a substitution $$\qquad {\Downarrow}(\llb{e}) : \llb{\llp{\Gamma}, \llp{\Delta}, {\downarrow}(\llb{e})^*(\llp{\Theta})} \xrightarrow[]{} \llb{\llp{\Gamma}, x : \llb{S}, \llp{\Theta}}$$
\item \emph{Lifting}: for each telescope $\llparenthesis \Gamma \rrparenthesis, \llparenthesis \Delta \rrparenthesis$ and type $\llbracket S \rrbracket \in \mathfrak{T}(\llbracket \llparenthesis \Gamma \rrparenthesis)$, there are \emph{lifting} functors $\uparrow_{\llparenthesis \Gamma \rrparenthesis}^{\llparenthesis \Delta \rrparenthesis} : \mathfrak{T}(\llbracket \llparenthesis \Gamma \rrparenthesis \rrbracket) \to \mathfrak{T}(\llbracket \llparenthesis \Gamma \rrparenthesis, \llparenthesis \Delta \rrparenthesis \rrbracket)$ and $\Uparrow_{\llparenthesis \Gamma \rrparenthesis}^{\llparenthesis \Delta \rrparenthesis \mid \llbracket S \rrbracket} : \mathfrak{T}(\llbracket \llparenthesis \Gamma \rrparenthesis \rrbracket_\bullet(\llbracket S \rrbracket)) \to \mathfrak{T}(\llbracket \llparenthesis \Gamma \rrparenthesis, \llparenthesis \Delta \rrparenthesis \rrbracket_\bullet(\uparrow_{\llparenthesis \Gamma \rrparenthesis}^{\llparenthesis \Delta \rrparenthesis}(\llbracket S \rrbracket)))$ that lift types into extended contexts by iterative weakening, and a \emph{telescopic} weakening functor $$\qquad \mathsf{W}_{\llp{\Gamma}}^{\llb{S} \mid \llp{\Delta}} : \mathfrak{T}(\llb{\llp{\Gamma}, \llp{\Delta}}) \to \mathfrak{T}(\llb{\llp{\Gamma}, x : \llb{S}, \omega_{\llp{\Gamma}}^{\llb{S}}(\llp{\Delta})})$$
\item \emph{Projection}: in an LFDC with term-level weakening, for a telescope $\llp{\Gamma}, \llp{\Delta}$ and a type $\llb{S} \in \mathfrak{T}(\llb{\llp{\Gamma}})$, there is a \emph{projection} $$\mathsf{proj}_{\llp{\Gamma}}^{\llb{\Delta} \mid \llb{S}} : \llb{\llp{\Delta}, x : {\uparrow}_{\llp{\Gamma}}^{\llp{\Delta}}(\llb{S})} \xrightarrow[]{} \llb{S} \in \mathfrak{T}(\llb{\llp{\Gamma}})$$
\end{itemize}
\end{dfn}

\subsection{Inference Rules}

In the bidirectional treatment of judgments-as-programs, a rule defines a procedure for evaluating a judgment on a subset of its input domain. For a rule of the form $$
\inferrule{J_1\\ \dots\\ J_n}{J}
$$ where $J$ and $J_1, \dots, J_n$ are judgments, the algorithmic interpretation is bottom-to-top: to evaluate the \emph{conclusion} $J$, evaluate the \emph{premises} $J_1, \dots, J_n$ in order. McBride \cite{mcbride} offers the following heuristic – a rule is a \emph{server} for its conclusion and a \emph{client} for its premises, accepting inputs to its conclusion, supplying inputs to its premises, receiving outputs of the premises, and forming the output of its conclusion accordingly. Because of McBride's discipline of writing all inputs to the left and all outputs to the right of the subject of each judgment, information thus flows \emph{clockwise} around a rule, starting and ending at the 6 o'clock position.

Judgments are then evaluated by nondeterministically evaluating rules whose input patterns match the input to the judgment. If no rule applies, the judgment signals failure, which is propagated to any rule invoking it as a subroutine. A successful evaluation is given by a derivation tree built from the rules -- hence induction on derivations can be applied as usual in the metatheory. To minimize nondeterminism, we shall seek to ensure that at most one rule applies to any syntactic form of the subject of its conclusion, so that judgment evaluation is \emph{syntax-directed}.

\subsubsection{Context Formation} Because we have defined the denotations of contexts in close connection to their syntactic representations, the rules for context formation are straightforward: $$
\small \inferrule*[right=Emp]{~}{\epsilon ~ \mathsf{Ctx} \gg \epsilon} \quad \inferrule*[right=Ext]{\Gamma ~ \mathsf{Ctx} \gg \llparenthesis \Gamma \rrparenthesis\\ \llparenthesis \Gamma \rrparenthesis \vdash S ~ \mathsf{Type} \gg \llbracket S \rrbracket\\ x \notin \Gamma}{\Gamma, x : S ~ \mathsf{Ctx} \gg \llparenthesis \Gamma \rrparenthesis, x : \llbracket S \rrbracket}
$$ 

\subsubsection{Annotation \& Embedding} We have the following rules for switching between type checking and inference. The rule for switching from inference to checking requires us to \emph{annotate} the term to be checked with the type against which to check it: $$
\small \inferrule*[Right=Annotate]{\llparenthesis \Gamma \rrparenthesis \vdash T ~ \mathsf{Type} \gg \llbracket T \rrbracket\\ \llparenthesis \Gamma \rrparenthesis \fatsemi \llparenthesis \Delta \rrparenthesis \vdash \llbracket T \rrbracket \ni t \gg \llbracket t \rrbracket}{\llparenthesis \Gamma \rrparenthesis \fatsemi \llparenthesis \Delta \rrparenthesis \vdash t : T \in \llbracket T \rrbracket \gg \llbracket t \rrbracket}
$$ while the rule for switching from checking to inference instead requires that the type we are able to synthesize for a given term is equal to the one already provided for checking: $$
\small \inferrule*[right=Embed]{\llparenthesis \Gamma \rrparenthesis \fatsemi \llparenthesis \Delta \rrparenthesis \vdash e \in \llbracket R \rrbracket \gg \llbracket e \rrbracket\\ \llbracket R \rrbracket = \llbracket T \rrbracket}{\llparenthesis \Gamma \rrparenthesis \fatsemi \llparenthesis \Delta \rrparenthesis \vdash \llbracket T \rrbracket \ni \underline{e} \gg \llbracket e \rrbracket}
$$

\subsubsection{Atoms \& Constants} Fix a signature $\Sigma$. We then have the following rule for instantiating atomic type families contained in $\Sigma$: $$
\small \inferrule*[right=Atom]{P(\llbracket S \rrbracket) \mapsto \llbracket P \rrbracket \in \Sigma \\\llparenthesis \Gamma \rrparenthesis \fatsemi \llparenthesis \Delta \rrparenthesis \vdash {\uparrow_\epsilon^{\llparenthesis \Gamma \rrparenthesis}}(\llbracket S \rrbracket) \ni s \gg \llbracket s \rrbracket}{\llparenthesis \Gamma \rrparenthesis, \llparenthesis \Delta \rrparenthesis, \llparenthesis \Theta \rrparenthesis \vdash Ps ~ \mathsf{Type} \gg \uparrow_{\llparenthesis \Gamma \rrparenthesis, \llparenthesis \Delta \rrparenthesis}^{\llparenthesis \Theta \rrparenthesis}({\downarrow} (\llbracket s \rrbracket)^* (\Uparrow_{\epsilon}^{\llparenthesis \Gamma \rrparenthesis \mid \llbracket S \rrbracket}(\llbracket P \rrbracket)))}
$$ Due to type-level weakening, we may make use of some \emph{part} $\llp{\Delta}$ of the context in forming the parameter $s$ for a type family $P$, discarding $\llp{\Theta}$ while keeping $\llp{\Gamma}$ at the type level. Note that once the part of the context $\llp{\Delta}$ used in constructing $s$ has been selected, the term-level typing rules enforce that $\llp{\Delta}$ must be used in accordance with the substructural constraints of the type theory. Similarly, we have the following rule for instantiating constants from $\Sigma$: $$
\small \inferrule*[Right=Const]{p : \llbracket S \rrbracket \mapsto \llbracket p \rrbracket \in \Sigma}{\llparenthesis \Gamma \rrparenthesis \fatsemi \epsilon \vdash p \in {\uparrow_\epsilon^{\llparenthesis \Gamma \rrparenthesis}}(\llbracket S \rrbracket) \gg \uparrow_\epsilon^{\llparenthesis \Gamma \rrparenthesis}(\llbracket p \rrbracket)}
$$

\subsubsection{Identity, Weakening \& Contraction} If the ambient LFDC has neither term-level weakening nor contraction, then we have only the following \emph{strict identity} rule: $$
\small \inferrule*[right=StrId]{~}{\llparenthesis \Gamma \rrparenthesis \fatsemi x : \llbracket S \rrbracket \vdash x \in \llbracket S \rrbracket \gg \text{id}_{\llbracket S \rrbracket}}
$$ If, on the other hand, the ambient LFDC has term-level weakening, then we may instead make use of the following \emph{weak identity} rule: $$
\small \inferrule*[right=WkId]{~}{\llparenthesis \Gamma \rrparenthesis \fatsemi \llparenthesis \Delta \rrparenthesis, x : {\uparrow}_{\llparenthesis \Gamma \rrparenthesis}^{\llparenthesis \Delta \rrparenthesis}(\llbracket S \rrbracket), \llparenthesis \Theta \rrparenthesis \vdash x \in \llbracket S \rrbracket\\ \gg \mathsf{proj}_{\llparenthesis \Gamma \rrparenthesis}^{\llparenthesis \Delta \rrparenthesis \mid \llbracket S \rrbracket} \circ \varoplus_{\llparenthesis \Delta \rrparenthesis}(\rho) \circ \varoplus_{\llparenthesis \Delta \rrparenthesis x : {\uparrow}^{\llparenthesis \Delta \rrparenthesis}_{\llparenthesis \Gamma \rrparenthesis}(\llbracket S \rrbracket)}(\top_{\llbracket \llparenthesis \Theta \rrparenthesis \rrbracket})}
$$ Similarly, if the ambient LFDC has contraction but \emph{not} term-level weakening, then we may make use of the following in addition to the strict identity rule above: $$
\small \inferrule*[right=StrContr]{~}{\llparenthesis \Gamma \rrparenthesis, x : \llbracket S \rrbracket \mid \epsilon \vdash x \in \omega_{\llbracket \llparenthesis \Gamma \rrparenthesis \rrbracket}^{\llbracket S \rrbracket}(\llbracket S \rrbracket) \gg \mathfrak{d}_{\llbracket \llparenthesis \Gamma \rrparenthesis \rrbracket, \llbracket S \rrbracket}}
$$ Finally, if the ambient LFDC is Cartesian, then we may instead use the following rule in addition to the weak idenity rule: $$
\small \inferrule*[right=WkContr]{~}{\llparenthesis \Gamma \rrparenthesis, x : \llbracket S \rrbracket, \llparenthesis \Delta \rrparenthesis \fatsemi \llparenthesis \Theta \rrparenthesis \vdash x \in \uparrow_{\llparenthesis \Gamma \rrparenthesis}^{x : \llbracket S \rrbracket, \llparenthesis \Delta \rrparenthesis}(\llbracket S \rrbracket)\\ \gg \uparrow_{\llparenthesis \Gamma \rrparenthesis, x : \llbracket S \rrbracket}^{\llparenthesis \Delta \rrparenthesis}(\mathfrak{d}_{\llbracket \llparenthesis \Gamma \rrparenthesis \rrbracket, \llbracket S \rrbracket}) \circ \top_{\llbracket \llparenthesis \Theta \rrparenthesis \rrbracket}}
$$

\subsubsection{Exchange} Unlike term-level weakening and contraction, exchange cannot be handled merely in the typing rules for individual variables, since it necessarily concerns the use of \emph{multiple} variables. Thus, the rules for exchange must allow permuting the variables in the term-level context of any typing judgment, i.e: $$
\small \inferrule*[right=ExchI]{\llparenthesis \Gamma \rrparenthesis \fatsemi \llparenthesis \Delta \rrparenthesis, y : \llbracket T \rrbracket, x : \omega_{\llbracket \llparenthesis \Gamma \rrparenthesis, \llparenthesis \Delta \rrparenthesis \rrbracket}^{\llbracket T \rrbracket}(\llbracket S \rrbracket), \mathsf{Exch}^*(\llparenthesis \Theta \rrparenthesis) \vdash \llbracket R \rrbracket \ni r \gg \llbracket r \rrbracket}{\llparenthesis \Gamma \rrparenthesis \fatsemi \llparenthesis \Delta \rrparenthesis, x : \llbracket S \rrbracket, y : \omega_{\llbracket \llparenthesis \Gamma \rrparenthesis, \llparenthesis \Delta \rrparenthesis \rrbracket}^{\llbracket S \rrbracket}(\llbracket T \rrbracket), \llparenthesis \Theta \rrparenthesis \vdash \llbracket R \rrbracket \ni r \gg \llbracket r \rrbracket \circ \mathsf{exch}}
$$ $$
\small \inferrule*[right=ExchE]{\llparenthesis \Gamma \rrparenthesis \fatsemi \llparenthesis \Delta \rrparenthesis, y : \llbracket T \rrbracket, x : \omega_{\llbracket \llparenthesis \Gamma \rrparenthesis, \llparenthesis \Delta \rrparenthesis \rrbracket}^{\llbracket T \rrbracket}(\llbracket S \rrbracket), \mathsf{Exch}^*(\llparenthesis \Theta \rrparenthesis) \vdash e \in \llbracket R \rrbracket \gg \llbracket e \rrbracket}{\llparenthesis \Gamma \rrparenthesis \fatsemi \llparenthesis \Delta \rrparenthesis, x : \llbracket S \rrbracket, y : \omega_{\llbracket \llparenthesis \Gamma \rrparenthesis, \llparenthesis \Delta \rrparenthesis \rrbracket}^{\llbracket S \rrbracket}(\llbracket T \rrbracket), \llparenthesis \Theta \rrparenthesis \vdash e \in \llbracket R \rrbracket \gg \llbracket e \rrbracket \circ \mathsf{exch}}
$$ where $$
\mathsf{Exch} = \alpha^{-1} \circ \llbracket \llparenthesis \Gamma \rrparenthesis, \llparenthesis \Delta \rrparenthesis \rrbracket _\bullet(\sigma)\circ \alpha \quad
\mathsf{exch} = \varoplus_{\llparenthesis \Delta \rrparenthesis}(\beta^{-1} \circ \langle \sigma, - \rangle \circ \beta)
$$ The rules for exchange are thus not syntax-directed, and so introduce an additional element of nondeterminism to type-checking, since type checking an expression may require trying all valid permutations of the context. For present purposes, this is a tolerable state of affairs, as there are only ever finitely many such permutations, and so this does not impact the decidability of type-checking. However, for practical use, further work will be necessary to find ways of cutting down on this nondeterminism. We can however omit these rules in Cartesian LFDCs, since by \textbf{Def. \ref{cart}} the exchange structure in a Cartesian LFDC already arises from term-level weakening and contraction.

\subsubsection{The Unit Type} The formation rule for the unit type is straightforward $$
\small \inferrule*[right=$\mathbbm{1}$Form]{~}{\llparenthesis \Gamma \rrparenthesis \vdash \mathbbm{1} ~ \mathsf{Type} \gg \mathbbm{1}_{\llbracket \llparenthesis \Gamma \rrparenthesis \rrbracket}}$$ Likewise the introduction rule, except that this rule admits a variation when the ambient LFDC has term-level weakening: $$\small \inferrule*[right=$\mathbbm{1}$IntroStr]{~}{\llparenthesis \Gamma \rrparenthesis \fatsemi \epsilon \vdash \mathbbm{1}_{\llbracket \llparenthesis \Gamma \rrparenthesis \rrbracket} \ni \langle \rangle \gg \text{id}_{\mathbbm{1}_{\llbracket \llparenthesis \Gamma \rrparenthesis \rrbracket}}}$$ $$ \small \inferrule*[right=$\mathbbm{1}$IntroWk]{~}{\llparenthesis \Gamma \rrparenthesis \fatsemi \llp{\Delta} \vdash \mathbbm{1}_{\llbracket \llparenthesis \Gamma \rrparenthesis \rrbracket} \ni \langle \rangle \gg \top_{\llb{\llp{\Delta}}}}
$$ However, the elimination rules for the unit type are fairly complex. There are two main reasons for this: one is that the unit type is \emph{positive} and so its elimination form is \emph{pattern-matching}, which may be applied either at the \emph{term level} or at the \emph{type level} -- hence there must be two distinct rules for such uses. Moreover, when performing such pattern-matching, we use some part $\llp{\Delta}$ of the ambient context in constructing a term $\llb{e}$ of the unit type, but the part of the context occurring after $\llp{\Delta}$ may implicitly depend upon $\llb{e}$ itself. To solve this issue, we require that a pattern matching expression represent these dependencies explicitly, by wrapping the remaining context in a type dependent upon the expression being matched over. We thus have the following term-level elimination rule: $$
\small \inferrule*[right=$\mathbbm{1}$Elim1]{
\llparenthesis \Gamma \rrparenthesis, \llparenthesis \Delta \rrparenthesis \fatsemi \llparenthesis \Theta \rrparenthesis \vdash e \in \mathbbm{1}_{\llbracket \llparenthesis \Gamma \rrparenthesis, \llparenthesis \Delta \rrparenthesis \rrbracket} \gg \llbracket e \rrbracket\\
\llparenthesis \Gamma \rrparenthesis \vdash R ~ \mathsf{Type} \gg \llbracket R \rrbracket\\ \llparenthesis \Gamma \rrparenthesis, \llparenthesis \Delta \rrparenthesis, a : \mathbbm{1}_{\llbracket \llparenthesis \Gamma \rrparenthesis, \llparenthesis \Delta \rrparenthesis \rrbracket} \vdash U ~ \mathsf{Type} \gg \llbracket U \rrbracket\\ \llparenthesis \Gamma \rrparenthesis, \llparenthesis \Delta \rrparenthesis, \llparenthesis \Theta \rrparenthesis \fatsemi \llparenthesis \Phi \rrparenthesis \vdash {\downarrow}(\llbracket e \rrbracket)^*(\llbracket U \rrbracket) \ni u \gg \llbracket u \rrbracket\\ \llparenthesis \Gamma \rrparenthesis \fatsemi \llparenthesis \Delta \rrparenthesis, z : \eta^*(\llbracket U \rrbracket) \vdash \llbracket R \rrbracket \ni r \gg \llbracket r \rrbracket}{\llparenthesis \Gamma \rrparenthesis \fatsemi \llparenthesis \Delta \rrparenthesis, \llparenthesis \Theta \rrparenthesis, \llparenthesis \Phi \rrparenthesis \vdash \mathsf{let}[a ._R^U] \langle \rangle = e ~ \mathsf{with} ~ z = u ~ \mathsf{in} ~ r \in \llbracket R \rrbracket\\ \gg \llbracket r \rrbracket \circ \varoplus_{\llparenthesis \Delta \rrparenthesis}(\ell \circ \langle \llbracket e \rrbracket, - \rangle) \circ \varoplus_{\llparenthesis \Delta \rrparenthesis, \llparenthesis \Theta \rrparenthesis}(\llbracket u \rrbracket)}
$$ and the following type-level elimination rule: $$
\small
\inferrule*[right=$\mathbbm{1}$Elim2]{\llp{\Gamma} \fatsemi \llp{\Delta} \vdash e \in \mathbbm{1}_{\llb{\llp{\Gamma}}} \gg \llb{e}\\
\llp{\Gamma}, a : \mathbbm{1}_{\llb{\llp{\Gamma}}} \vdash R ~ \mathsf{Type} \gg \llb{R}\\ \llp{\Gamma}, b : \mathbbm{1}_{\llb{\llp{\Gamma}}} \vdash U ~ \mathsf{Type} \gg \llb{U}\\ \llp{\Gamma}, b : \mathbbm{1}_{\llb{\llp{\Gamma}}}, c : \llb{U} \vdash V ~ \mathsf{Type} \gg \llb{V}\\ \llp{\Gamma}, \llp{\Delta} \fatsemi \llp{\Theta} \vdash {\downarrow}(\llb{e})^*(\llb{U}) \ni u \gg \llb{u}\\ \llp{\Gamma}, \llp{\Delta}, \llp{\Theta} \fatsemi \llp{\Phi} \vdash {\downarrow}(\llb{u})^*({\downarrow}(\llb{e})_\bullet(\llb{U})^*(\llb{V})) \ni v \gg \llb{v}\\ \llp{\Gamma}, w : \eta^*(\llb{U}) \fatsemi z : \eta_\bullet(\llb{U})^*(\llb{V}) \vdash \eta^*(\llb{R}) \ni r \gg \llb{r}}{\llp{\Gamma}, \llp{\Delta}, \llp{\Theta} \fatsemi \llp{\Phi} \vdash\\ \mathsf{let}[a,b,c._R^{U,V}]\langle \rangle = e ~ \mathsf{with} ~ w = u ~ \mathsf{and} ~ z = v ~ \mathsf{in} ~ r \in {\downarrow}(\llb{e})^*(\llb{R}) \\ \gg {\downarrow}(\llb{u})^*({\downarrow}(\llb{e})_\bullet(\llb{U})^*(\eta^{-1}_\bullet(\eta^*(\llb{U}))^*(\llb{r}))) \circ \llb{v}}
$$ On their own these rules are not quite sufficient, due to a quirk of the unit type: if the part of a context ocurring after that used in constructing $\llb{e}$ is empty, we cannot encode this via a variable of the unit type, since we would then generally need to eliminate this variable, bringing us round in a circle. We thus have the following additional rules to handle these exceptional cases at the term level: $$
\small \inferrule*[right=$\mathbbm{1}$Elim3]{\llparenthesis \Gamma \rrparenthesis , \llparenthesis \Delta \rrparenthesis \fatsemi \llparenthesis \Theta \rrparenthesis \vdash e \in \mathbbm{1}_{\llbracket \llparenthesis \Gamma \rrparenthesis, \llparenthesis \Delta \rrparenthesis \rrbracket} \gg \llbracket e \rrbracket\\
\llparenthesis \Gamma \rrparenthesis \vdash R ~ \mathsf{Type} \gg \llbracket R \rrbracket\\ \llparenthesis \Gamma \rrparenthesis \fatsemi \llparenthesis \Delta \rrparenthesis \vdash \llbracket R \rrbracket \ni r \gg \llbracket r \rrbracket}{\llparenthesis \Gamma \rrparenthesis \fatsemi \llparenthesis \Delta \rrparenthesis, \llparenthesis \Theta \rrparenthesis \vdash \mathsf{let}[_R] \langle \rangle = e ~ \mathsf{in} ~ r \in \llbracket R \rrbracket \gg \llbracket r \rrbracket \circ \varoplus_{\llparenthesis \Delta \rrparenthesis}(\rho \circ \llbracket e \rrbracket)}
$$ and at the type level: $$ \small
\inferrule*[right=$\mathbbm{1}$Elim4]{\llp{\Gamma} \fatsemi \llp{\Delta} \vdash e \in \mathbbm{1}_{\llb{\llp{\Gamma}}} \gg \llb{e}\\ 
\llp{\Gamma}, a : \mathbbm{1}_{\llb{\llp{\Gamma}}} \vdash R ~ \mathsf{Type} \gg \llb{R}\\ \llp{\Gamma}, b : \mathbbm{1}_{\llb{\llp{\Gamma}}} \vdash U ~ \mathsf{Type} \gg \llb{U}\\ \llp{\Gamma}, \llp{\Delta} \fatsemi \llp{\Theta} \vdash {\downarrow}(\llb{e})^*(\llb{U}) \ni u \gg \llb{u}\\ \llp{\Gamma} \fatsemi z : \eta^*(\llb{U}) \vdash \eta^*(\llb{R}) \ni r \gg \llb{r}}{\llp{\Gamma}, \llp{\Delta} \fatsemi \llp{\Theta} \vdash \mathsf{let}[a,b._R^U] \langle \rangle = e ~ \mathsf{with} ~ z = u ~ \mathsf{in} ~ r\\ \in {\downarrow}(\llb{e})^*(\llb{R}) \gg {\downarrow}(\llb{e})^*(\eta^*(\llb{r})) \circ \llb{u}}
$$

\subsubsection{Dependent Pair Types} The formation rule for dependent pair types is straightforward: $$
\small \inferrule*[right=$\varoplus$Form]{\llp{\Gamma} \vdash S ~ \mathsf{Type} \gg \llb{S}\\ \llp{\Gamma}, x : \llb{S} \vdash T ~ \mathsf{Type} \gg \llb{T}}{\llp{\Gamma} \vdash \bigoplus x : S . T ~ \mathsf{Type \gg \varoplus_{\llb{S}}}(\llb{T})}
$$ As is the introduction rule: $$
\small \inferrule*[right=$\varoplus$Intro]{\llp{\Gamma} \fatsemi \llp{\Delta} \vdash \llb{S} \ni s \gg \llb{s}\\ \llp{\Gamma}, \llp{\Delta} \fatsemi \llp{\Theta} \vdash {\downarrow}(\llb{s})^*(\llb{T}) \ni t \gg \llb{t}}{\llp{\Gamma} \fatsemi \llp{\Delta}, \llp{\Theta} \vdash \varoplus_{\llb{S}}(\llb{T}) \ni \langle s, t \rangle\\ \gg \langle \llb{s}, - \rangle \circ {\oplus}\mathsf{asc}^{\llp{\Delta} \mid \llb{\llp{\Theta}}}_{\llp{\Gamma}} \circ \varoplus_{\llp{\Delta}}(\llb{t})}
$$ Because the dependent pair type is positive, its elimination rules follow the same pattern as the unit type. We have a term-level rule: $$ \small
\inferrule*[right=$\varoplus$Elim1]{\llp{\Gamma}, \llp{\Delta} \fatsemi \llp{\Theta} \vdash e \in \varoplus_{\llb{S}}(\llb{T}) \gg \llb{e}\\
\llp{\Gamma} \vdash R ~ \mathsf{Type} \gg \llb{R} \\ \llp{\Gamma}, \llp{\Delta}, a : \varoplus_{\llb{S}}(\llb{T}) \vdash U ~ \mathsf{Type} \gg \llb{U}\\ \llp{\Gamma}, \llp{\Delta}, \llp{\Theta} \fatsemi \llp{\Phi} \vdash {\downarrow}(\llb{e})^*(\llb{U}) \ni u \gg \llb{u}\\ \llp{\Gamma} \fatsemi \llp{\Delta}, x : \llb{S}, y : \llb{T}, z : (\alpha^{-1})^*(\llb{U}) \vdash \llb{R} \ni r \gg \llb{r}}{\llp{\Gamma} \fatsemi \llp{\Delta}, \llp{\Theta}, \llp{\Phi} \vdash \mathsf{let}[a._R^U] \langle x,y \rangle = e ~ \mathsf{with} ~ z = u ~ \mathsf{in} ~ r \in \llb{R}\\ \gg \llb{r} \circ \varoplus_{\llp{\Delta}}(\beta^{-1} \circ \langle \llb{e}, - \rangle \circ {\oplus}\mathsf{asc}_{\llp{\Gamma}, \llp{\Delta}}^{\llp{\Theta} \mid {\downarrow}(\llb{e})^*(\llb{U})} \circ \varoplus_{\llp{\Theta}}(\llb{u}))}
$$ and a type-level rule: $$ \small
\inferrule*[right=$\varoplus$Elim2]{\llp{\Gamma} \fatsemi \llp{\Delta} \vdash e \in \oplus_{\llb{S}}(\llb{T}) \gg \llb{e}\\ \llp{\Gamma}, a : \varoplus_{\llb{S}}(\llb{T}) \vdash R ~ \mathsf{Type} \gg \llb{R}\\ \llp{\Gamma}, b : \varoplus_{\llb{S}}(\llb{T}) \vdash U ~ \mathsf{Type} \gg \llb{U}\\ \llp{\Gamma}, b : \varoplus_{\llb{S}}(\llb{T}), c : \llb{U} \vdash V ~ \mathsf{Type} \gg \llb{V}\\ \llp{\Gamma}, \llp{\Delta} \fatsemi \llp{\Theta} \vdash {\downarrow}(\llb{e})^*(\llb{U}) \ni u \gg \llb{u}\\ \llp{\Gamma}, \llp{\Delta}, \llp{\Theta} \fatsemi \llp{\Phi} \vdash {\downarrow}(\llb{u})^*({\downarrow}(\llb{e})_\bullet(\llb{U})^*(\llb{V})) \ni v \gg \llb{v}\\ \llp{\Gamma}, x : \llb{S}, y : \llb{T}, w : (\alpha^{-1})^*(\llb{U}) \fatsemi z : \alpha_\bullet^{-1}(\llb{U})^*(\llb{V}) \vdash\\ \llb{R} \ni r \gg \llb{r}}{\llp{\Gamma}, \llp{\Delta}, \llp{\Theta} \fatsemi \llp{\Phi} \vdash\\ \mathsf{let}[a,b,c._R^{U,V}] \langle x,y \rangle = e ~ \mathsf{with} ~ w = u ~ \mathsf{and} ~ z = v ~ \mathsf{in} ~ r \in {\downarrow}(\llb{e})^*(\llb{R})\\ \gg {\downarrow}(\llb{u})^*({\downarrow}(\llb{e})_\bullet(\llb{U})^*(\alpha_\bullet((\alpha^{-1})^*(\llb{U}))^*(\llb{r}))) \circ  \llb{v}}
$$

\subsubsection{Term-Level Function Types} The rules for term-level functions are each essentially variations on the rules for function types in intuitionistic type theory. We have the formation rule for $\sslash$: $$
\small \inferrule*[right=$\sslash$Form]{\llp{\Gamma} \vdash S ~ \mathsf{Type} \gg \llbracket S \rrbracket\\ \llp{\Gamma} \vdash T ~ \mathsf{Type} \gg \llbracket T \rrbracket}{\llparenthesis \Gamma \rrparenthesis \vdash S \sslash T ~ \mathsf{Type} \gg \sslash_{\llbracket S \rrbracket}(\llbracket T \rrbracket)}
$$ and the following for $\bbslash$: $$
\small \inferrule*[right=$\bbslash$Form]{\llp{\Gamma} \vdash S ~ \mathsf{Type} \gg \llbracket S \rrbracket\\ \llp{\Gamma} \vdash T ~ \mathsf{Type} \gg \llbracket T \rrbracket}{\llp{\Gamma} \vdash S \bbslash T ~ \mathsf{Type} \gg \bbslash_{\llbracket S \rrbracket}(\llbracket T \rrbracket)}
$$ The intro rule for $\sslash$ follows the usual form of function abstraction, introducing a variable on the \emph{left} of the term-level context: $$
\small \inferrule*[right=$\sslash$Intro]{\llparenthesis \Gamma \rrparenthesis \fatsemi x : \llbracket S \rrbracket, \omega_{\llbracket \llparenthesis \Gamma \rrparenthesis \rrbracket}^{\llbracket S \rrbracket}(\llparenthesis \Delta \rrparenthesis) \vdash \llbracket T \rrbracket \ni t \gg \llbracket t \rrbracket}{\llparenthesis \Gamma \rrparenthesis \fatsemi \llparenthesis \Delta \rrparenthesis \vdash \sslash_{\llbracket S \rrbracket}(\llbracket T \rrbracket) \ni \sslash x . t \gg \sslash(\llbracket t \rrbracket)}
$$ while the intro rule for $\bbslash$ introduces a variable on \emph{right}: $$
\small \inferrule*[right=$\bbslash$Intro]{\llparenthesis \Gamma \rrparenthesis \fatsemi \llparenthesis \Delta \rrparenthesis, x : \uparrow_{\llp{\Gamma}}^{\llp{\Delta}}(\llb{S}) \vdash \llbracket T \rrbracket \ni t \gg \llbracket t \rrbracket}{\llparenthesis \Gamma \rrparenthesis \fatsemi \llparenthesis \Delta \rrparenthesis \vdash \bbslash_{\llbracket S \rrbracket}(\llbracket T \rrbracket) \ni \bbslash x . t \gg \bbslash(\llbracket t \rrbracket)}
$$ The elimination rule for $\sslash$ thus forms its argument using part of the context occurring to the \emph{left} of the part used in forming a function: $$
\small \inferrule*[right=$\sslash$Elim]{\llparenthesis \Gamma \rrparenthesis \fatsemi \llparenthesis \Theta \rrparenthesis \vdash f \in \sslash_{\llbracket S \rrbracket}(\llbracket T \rrbracket) \gg \llbracket f \rrbracket\\ \llparenthesis \Gamma \rrparenthesis \fatsemi \llparenthesis \Delta \rrparenthesis \vdash \llbracket S \rrbracket \ni s \gg \llbracket s \rrbracket}{\llparenthesis \Gamma \rrparenthesis \fatsemi \llparenthesis \Delta \rrparenthesis, {\uparrow_{\llparenthesis \Gamma \rrparenthesis}^{\llparenthesis \Delta \rrparenthesis}}(\llparenthesis \Theta \rrparenthesis) \vdash s \triangleright f \in \llbracket T \rrbracket\\ \gg \sslash^{-1}(\llb{f}) \circ \langle \llb{s}, - \rangle \circ {\oplus}\mathsf{asc}^{\llp{\Delta} \mid \llb{\uparrow_{\llp{\Gamma}}^{\llp{\Delta}}(\llp{\Theta})}}_{\llp{\Gamma}}}
$$ while the elimination rule for $\bbslash$ forms the argument to $f$ using part of the context occurring to the \emph{right} of the part used in forming $f$: $$
\small \inferrule*[right=$\bbslash$Elim]{\llparenthesis \Gamma \rrparenthesis \fatsemi \llparenthesis \Delta \rrparenthesis \vdash f \in \bbslash_{\llbracket S \rrbracket}(\llbracket T \rrbracket) \gg \llbracket f \rrbracket\\ \llparenthesis \Gamma \rrparenthesis \fatsemi \llparenthesis \Theta \rrparenthesis \vdash \llbracket S \rrbracket \ni s \gg \llbracket s \rrbracket}{\llparenthesis \Gamma \rrparenthesis \fatsemi \llparenthesis \Delta \rrparenthesis, {\uparrow_{\llbracket \llparenthesis \Gamma \rrparenthesis \rrbracket}^{\llparenthesis \Delta \rrparenthesis}}(\llparenthesis \Theta \rrparenthesis) \vdash f \triangleleft s \in \llbracket T \rrbracket \gg \bbslash^{-1}(\llb{f}) \circ \uparrow_{\llp{\Gamma}}^{\llp{\Delta}}(\llb{s})}
$$

\subsubsection{Type-Level Function Types} The rules for type-level functions are again a variation on the rules for function types in intuitionistic type theory, but this time the function types in question are \emph{dependent function types}. We have the following formation rule: $$
\small \inferrule*[right=$\forall$Form]{\llp{\Gamma} \vdash S ~ \mathsf{Type} \gg \llb{S}\\ \llp{\Gamma}, x : \llb{S} \vdash T ~ \mathsf{Type} \gg \llb{T}}{\llp{\Gamma} \vdash \forall x : S . T ~ \mathsf{Type} \gg \forall_{\llb{S}}(\llb{T})}
$$ and the following introduction rule $$
\small \inferrule*[right=$\forall$Intro]{\llp{\Gamma}, x : \llb{S} \fatsemi \omega_{\llb{\llp{\Gamma}}}^{\llb{S}}(\llp{\Delta}) \vdash \llb{T} \ni t \gg \llb{t}}{\llp{\Gamma} \fatsemi \llp{\Delta} \vdash \forall_{\llb{S}}(\llb{T}) \ni \Lambda x . t \gg \Lambda(\llb{t})}
$$ and the corresponding elimination rule $$
\small \inferrule*[right=$\forall$Elim]{\llp{\Gamma} \fatsemi \llp{\Theta} \vdash f \in \forall_{\llb{S}}(\llb{T}) \gg \llb{f}\\ \llp{\Gamma} \fatsemi \llp{\Delta} \vdash \llb{S} \ni s \gg \llb{s}}{\llp{\Gamma}, \llp{\Delta} \fatsemi \uparrow_{\llp{\Gamma}}^{\llp{\Delta}}(\llp{\Theta}) \vdash f \cdot s \in {\downarrow}(\llb{s})^*(\llb{T})\\ \gg {\downarrow}(\llb{s})^*(\Lambda^{-1}(\llb{f}))}
$$ wherein we may make use of any part of the type-level context in forming an input to $f$, provided that the term-level context does not depend upon this part of the type-level context.

\subsubsection{Product Types} The rules for product types are largely straightforward. We have the following type formation rule: $$
\inferrule*[right=$\times$Form]{\llp{\Gamma} \vdash S ~ \mathsf{Type} \gg \llb{S}\\ \llp{\Gamma} \vdash T ~ \mathsf{Type } \gg \llb{T}}{\llp{\Gamma} \vdash S \times T ~ \mathsf{Type} \gg \llb{S} \times_{\llb{\llp{\Gamma}}} \llb{T}}
$$ and the following introduction rule $$
\inferrule*[right=$\times$Intro]{\llp{\Gamma} \fatsemi \llp{\Delta} \vdash \llb{S} \ni s \gg \llb{s}\\ \llp{\Gamma} \fatsemi \llp{\Delta} \vdash \llb{T} \ni t \gg \llb{t}}{\llp{\Gamma} \fatsemi \llp{\Delta} \vdash \llb{S} \times_{\llb{\llp{\Gamma}}} \llb{T} \ni (s,t) \gg (\llb{s}, \llb{t})}
$$ Note that the term-level context $\llp{\Delta}$ is used in checking both $s$ and $t$, since $(s,t)$ offers a \emph{choice} of which of $s,t$ to construct from the resources in $\llp{\Delta}$. We then have the following elimination rules, which allow for making such a choice: $$
\small \inferrule*[right=$\times$Elim1]{\llp{\Gamma} \fatsemi \llp{\Delta} \vdash e \in \llb{S} \times_{\llb{\llp{\Gamma}}} \llb{T} \gg \llb{e}}{\llp{\Gamma} \fatsemi \llp{\Delta} \vdash \pi_1(e) \in \llb{S} \gg \pi_1 \circ \llb{e}}$$ and $$ \small \inferrule*[right=$\times$Elim2]{\llp{\Gamma} \fatsemi \llp{\Delta} \vdash e \in \llb{S} \times_{\llb{\llp{\Gamma}}} \llb{T} \gg \llb{e}}{\llp{\Gamma} \fatsemi \llp{\Delta} \vdash \pi_2(e) \in \llb{T} \gg \pi_2 \circ \llb{e}}
$$

\subsection{Syntactic Completeness}

As a consequence of the construction of the rules given above, we automatically have a form of type soundness for the theory: every well typed syntactic expression corresponds to a well defined semantic object of the appropriate kind. Beyond such soundness, however, there are further desiderata we may have for such a theory, namely \emph{completeness} and \emph{effectivity/decidability} of the above-defined procedure for type-checking/computing denotations of expressions.

As to the \emph{completeness} of this theory, by soundness we already have that the syntax of the theory may be interpreted in any strict LFDC, so it remains only to show that this syntax is closed under the constructions available in a (strict) LFDC, and therefore that the syntax itself forms such a (strict) LFDC. From this it will follow that a syntactic expression is well typed if and only if a corresponding semantic object of the appropriate kind exists in \emph{every} strict LFDC.

The syntax of the theory already includes primitive constructs corresponding to all the semantic type-formers in an LFDC with type-level weakening, function types, and product types. Therefore all that remains is to prove the \emph{admissibility} of syntactic constructions corresponding to the parts of such an LFDC not given by its type-formers, which are namely: type-level weakening, and substitution/composition.

\begin{theorem} ~
\begin{itemize}
\item If $\llp{\Gamma}, \llp{\Delta} \vdash T ~ \mathsf{Type} \gg \llb{T}$ then $$\llp{\Gamma}, x : \llb{S}, \omega_{\llb{\llp{\Gamma}}}^{\llb{S}}(\llp{\Delta}) \vdash T ~ \mathsf{Type} \gg \mathsf{W}_{\llp{\Gamma}}^{\llb{S} \mid \llp{\Delta}}(\llb{T})$$
\item If $\llp{\Gamma}, \llp{\Delta} \fatsemi \llp{\Theta} \vdash \llb{T} \ni t \gg \llb{t}$ then $$\begin{array}{c} \llp{\Gamma}, x : \llb{S}, \omega_{\llb{\llp{\Gamma}}}^{\llb{S}}(\llp{\Delta}) \fatsemi \mathsf{W}_{\llp{\Gamma}}^{\llb{S} \mid \llp{\Delta}}(\llp{\Theta}) \vdash\\ \mathsf{W}_{\llp{\Gamma}}^{\llb{S} \mid \llp{\Delta}}(\llb{T}) \ni t \gg \mathsf{W}_{\llp{\Gamma}}^{\llb{S} \mid \llp{\Delta}}(\llb{t}) \end{array}$$
\item If $\llp{\Gamma}, \llp{\Delta} \fatsemi \llp{\Theta} \vdash e \in \llb{R} \gg \llb{e}$ then $$\begin{array}{c} \llp{\Gamma}, x : \llb{S}, \omega_{\llb{\llp{\Gamma}}}^{\llb{S}}(\llp{\Delta}) \fatsemi \mathsf{W}_{\llp{\Gamma}}^{\llb{S} \mid \llp{\Delta}}(\llp{\Theta}) \vdash\\ e \in \mathsf{W}_{\llp{\Gamma}}^{\llb{S} \mid \llp{\Delta}}(\llb{R}) \gg \mathsf{W}_{\llp{\Gamma}}^{\llb{S} \mid \llp{\Delta}}(\llb{e}) \end{array}$$
\end{itemize}
\end{theorem}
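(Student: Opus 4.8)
The plan is to prove all three statements simultaneously by mutual induction on the derivations of the three judgments $\mathsf{Type}$, $\ni$, and $\in$. These must be handled together because type-formation premises occur inside term rules and conversely, so no one judgment can be treated in isolation. Since the inference rules are (essentially) syntax-directed and the subject expressions $T$, $t$, $e$ are left \emph{unchanged} by weakening, each case of the induction reduces to a purely semantic obligation: that applying the relevant rule in the weakened context $\llp{\Gamma}, x : \llb{S}, \omega_{\llb{\llp{\Gamma}}}^{\llb{S}}(\llp{\Delta})$ produces exactly the telescopic weakening $\mathsf{W}_{\llp{\Gamma}}^{\llb{S}\mid\llp{\Delta}}$ of the denotation computed by the rule in the original context. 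Because $\llp{\Delta}$ ranges over arbitrary telescopes, the statement already encompasses insertion of $x : \llb{S}$ at any position in the type-level context, which is precisely the generality the induction needs when recursing into premises whose contexts are split or extended differently from the conclusion.

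The technical core is a family of commutation lemmas asserting that $\mathsf{W}$ commutes with every semantic operation appearing in the outputs of the rules: the telescopic pair functor $\varoplus_{\llp{\Delta}}$, term substitution $\downarrow$ and $\Downarrow$, the lifting functors $\uparrow$ and $\Uparrow$, the reassociation isomorphisms ${\bullet}\mathsf{asc}$ and ${\oplus}\mathsf{asc}$, the projection maps, and the structural natural transformations $\eta, \rho, \ell, \alpha, \beta$ together with the pairing operation $\langle -, - \rangle$. I would establish these by unfolding $\mathsf{W}$ through its defining weakening isomorphisms $\zeta, \theta$ (which relocate a weakened type past a substitution) and $\xi, \chi$ (which commute weakening past unit and pair types), together with the Beck--Chevalley isomorphisms $\gamma, \delta$. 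Crucially, since we work in a \emph{strict} LFDC, all of these coherence isomorphisms are identities, so each commutation lemma becomes a strict equality rather than an equation-up-to-coherent-isomorphism; this collapses what would otherwise be a forbidding diagram chase into a sequence of on-the-nose rewrites.

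With these lemmas in hand, the case analysis is largely mechanical. For the formation and introduction rules of the unit, pair, function, and product types, and for the identity/weakening/contraction rules, one applies the induction hypotheses to the premises --- taking care to extend $\llp{\Delta}$ by the variables bound in each premise so that the weakening position is preserved --- and then rewrites the conclusion's output using the commutation lemmas until it coincides with $\mathsf{W}$ applied to the original output. The embedding, annotation, atom, and constant rules are immediate, since their outputs are built only from lifting and substitution, for which commutation is direct.

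The main obstacle will be the elimination rules for the two positive types, and in particular the type-level eliminations $\mathbbm{1}\mathsf{Elim2}$ and $\varoplus\mathsf{Elim2}$. There the type-level context is split four ways as $\llp{\Gamma}, \llp{\Delta}, \llp{\Theta}, \llp{\Phi}$, the output involves \emph{nested} substitutions ${\downarrow}(\llb{u})^* \circ {\downarrow}(\llb{e})_\bullet(\llb{U})^*$ along the matched eliminand $\llb{e}$ and the motive components $\llb{U}, \llb{V}$, and several fresh variables $a,b,c,w,z$ are bound at positions that must be correctly relocated relative to the inserted $x : \llb{S}$. Verifying that $\mathsf{W}$ distributes over this composite --- i.e. that weakening commutes with substitution along a matched eliminand and with the motive's dependency --- is where essentially all the real work lies, and will require chaining the commutation lemmas for $\downarrow$, $\Downarrow$, and ${\bullet}\mathsf{asc}$ in the correct order. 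I expect the exchange rules to pose a secondary difficulty, since there weakening must be shown to commute with the permutation morphism $\mathsf{exch}$; but this again reduces, by strictness, to naturality of $\sigma$ together with the coherence of $\beta$.
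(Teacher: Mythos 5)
Your proposal takes essentially the same approach as the paper, whose entire proof is the single line ``Induction on derivations'': your mutual induction on the three judgments, with strictness collapsing the coherence isomorphisms so that $\mathsf{W}$ commutes on the nose with the semantic operations in each rule's output, is a faithful and correctly structured elaboration of exactly that argument. Your observation that the arbitrariness of $\llp{\Delta}$ gives the induction hypothesis the generality needed to track variables bound in premises is the right technical pivot, and your identification of the type-level eliminations ($\mathbbm{1}\mathsf{Elim2}$, $\varoplus\mathsf{Elim2}$) as the locus of real work is consistent with where the paper's unstated case analysis would be heaviest.
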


\begin{proof} Induction on derivations. \end{proof}

\begin{theorem}
Given expressions $T,t,f$ and an expression $e$ not containing any variable bound in $T,t,f$, respectively, write $T[e/x], t[e/x]$, and $f[e/x]$ for the uniform substitution of $e$ for all free occurrences of the variable $x$ in $T,t,f$, respectively. We then have the following:

\begin{itemize}
\item If $\llp{\Gamma}, x : \llb{S}, \llp{\Theta} \vdash T ~ \mathsf{Type} \gg \llb{T}$ and $\llp{\Gamma} \fatsemi \llp{\Delta} \vdash e \in \llb{S} \gg \llb{e}$ then $$\llp{\Gamma}, \llp{\Delta}, \llb{e}^*(\llp{\Theta}) \vdash T[e/x] \gg {\Downarrow}(\llb{e})^*(\llb{T})$$
\item If $\llp{\Gamma} \fatsemi \llp{\Delta}, x : \llb{S}, \llp{\Phi} \vdash \llb{T} \ni t \gg \llb{t}$ and $\llp{\Gamma}, \llp{\Delta} \fatsemi \llp{\Theta} \vdash e \in \llb{S} \gg \llb{e}$, then $$\begin{array}{c} \llp{\Gamma} \fatsemi \llp{\Delta}, \llp{\Theta}, \llb{e}^*(\llp{\Phi}) \vdash\\ \llb{T} \ni t[e/x] \gg \llb{t} \circ \varoplus_{\llp{\Delta}}(\llb{e}) \end{array}$$
\item If $\llp{\Gamma}, x : \llb{S}, \llp{\Theta} \fatsemi \llp{\Phi} \vdash \llb{T} \ni t \gg \llb{t}$ and $\llp{\Gamma} \fatsemi \llp{\Delta} \vdash e \in \llb{S} \gg \llb{e}$, then $$\begin{array}{c} \llp{\Gamma}, \llp{\Delta}, \llb{e}^*(\llp{\Theta}) \fatsemi {\Downarrow}(\llb{e})^*(\llp{\Phi}) \vdash\\ {\Downarrow}(\llb{e})^*(\llb{T}) \ni t[e/x] \gg {\Downarrow}(\llb{e})^*(\llb{t}) \end{array}$$
\item If $\llp{\Gamma} \fatsemi \llp{\Delta}, x : \llb{S}, \llp{\Phi} \vdash f \in \llb{R} \gg \llb{f}$ and $\llp{\Gamma}, \llp{\Delta} \fatsemi \llp{\Theta} \vdash e \in \llb{S} \gg \llb{e}$, then $$\begin{array}{c}\llp{\Gamma} \fatsemi \llp{\Delta}, \llp{\Theta}, \llb{e}^*(\llp{\Phi}) \vdash\\ f[e/x] \in \llb{R} \gg \llb{f} \circ \varoplus_{\llp{\Delta}}(\llb{e}) \end{array}$$
\item If $\llp{\Gamma}, x : \llb{S}, \llp{\Theta} \fatsemi \llp{\Phi} \vdash f \in \llb{R} \gg \llb{f}$ and $\llp{\Gamma} \fatsemi \llp{\Delta} \vdash e \in \llb{S} \gg \llb{e}$, then $$\begin{array}{c}\llp{\Gamma}, \llp{\Delta}, \llb{e}^*(\llp{\Theta}) \fatsemi {\Downarrow}(\llb{e})^*(\llp{\Phi}) \vdash\\ f \in {\Downarrow}(\llb{e})^*(\llb{T}) \gg {\Downarrow}(\llb{e})^*(\llb{f}) \end{array}$$
\end{itemize}
\end{theorem}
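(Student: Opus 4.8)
The plan is to prove all five statements simultaneously by mutual induction on the derivation of the formation/typing judgment for the subject — the type $T$, the introduction form $t$, or the elimination form $f$ — into which we substitute. The five clauses fall into two families with genuinely different characters. The two \emph{term-level} clauses, in which $x$ occupies the term-level context $\llp{\Delta}, x : \llb{S}, \llp{\Phi}$, produce their denotation by \emph{post-composition} with the telescope action $\varoplus_{\llp{\Delta}}(\llb{e})$ together with a reindexing $\llb{e}^*(\llp{\Phi})$ of the tail of the context; note that here the checked/inferred type ($\llb{T}$, resp. $\llb{R}$) is left \emph{unchanged}, reflecting that the term-level context does not influence types. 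The three \emph{type-level} clauses, in which $x$ occupies $\llp{\Gamma}, x : \llb{S}, \llp{\Theta}$, instead produce their denotation by \emph{pullback} ${\Downarrow}(\llb{e})^*$ along the term-substitution ${\Downarrow}(\llb{e})$, and do alter the type. Because types carry embedded terms (e.g. the atomic family $Ps$) and introduction/elimination forms carry embedded types (motives and annotations), none of the five clauses can be established in isolation; this mutual dependency is exactly what forces a single simultaneous induction.

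The base cases are the identity, weakening, and contraction rules. For the term-level clauses, \textsc{StrId}/\textsc{WkId} give $\llb{t}$ as (a composite ending in) $\text{id}$ or a projection $\mathsf{proj}$, and precomposing with $\varoplus_{\llp{\Delta}}(\llb{e})$ collapses, via the functoriality of telescopes and the defining equations of $\mathsf{proj}$, $\rho$ and $\ell$, to the denotation of $e$ itself. For the type-level clauses the relevant base cases unfold ${\Downarrow}(\llb{e})^*$ applied to the identity/contraction denotations and appeal to the naturality of $\mathfrak{d}$ in the context.

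For the inductive step I would treat each type-, introduction-, and elimination-former in turn. The negative formers ($\sslash$, $\bbslash$, $\forall$, $\times$) are routine: each rule assigns its denotation by an adjunction transpose (or the universal property of the product) applied to the denotations of its premises, and the required commutation with substitution is immediate from the Beck–Chevalley isomorphisms these formers are assumed to satisfy, which — crucially — are \emph{identities} because the ambient LFDC is strict; the binder cases ($\sslash$\textsc{Intro}, $\forall$\textsc{Intro}, etc.) additionally need only the naturality of the transpose to commute substitution past the bound variable. The structural rules (\textsc{Exch}, \textsc{Annotate}/\textsc{Embed}, \textsc{Atom}/\textsc{Const}) likewise reduce to the naturality of $\sigma$, $\mathsf{exch}$, and the lifting functors $\uparrow$, $\Uparrow$. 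Throughout, I would invoke the auxiliary naturality lemmas for ${\Downarrow}$, $\varoplus_{\llp{\Delta}}$, $\uparrow$, $\mathsf{W}$ and for the reassociation isomorphisms ${\bullet}\mathsf{asc}$, ${\oplus}\mathsf{asc}$ established in Appendix A.

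The main obstacle will be the positive eliminators — the unit-type and dependent-pair elimination rules. Their denotations are the most intricate in the system: they pull the motive $\llb{U}$ (and, at the type level, a second motive $\llb{V}$) back along \emph{nested} term-substitutions ${\downarrow}(\llb{e})^*$ and ${\downarrow}(\llb{u})^*$, and then re-pair using $\langle \llb{e}, - \rangle$ together with $\alpha$, $\beta$, $\ell$, $\rho$. To push an outer substitution through such a rule I must show it commutes with each of these nested pullbacks — which is where the Beck–Chevalley squares $\gamma$, $\delta$ and the naturality of $\langle -, - \rangle$, $\alpha$, $\beta$ do the real work — while simultaneously tracking the reindexing of the several telescopes ($\llp{\Delta}$, $\llp{\Theta}$, $\llp{\Phi}$) that these rules thread together. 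That $\gamma$, $\delta$ and the weakening isomorphisms $\zeta$, $\theta$, $\xi$, $\chi$ are strict identities is precisely what keeps this bookkeeping finite; without strictness these cases would degenerate into chasing a large web of coherence diagrams, and it is exactly to forestall this that strictness was imposed.
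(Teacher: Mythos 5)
Your proposal takes exactly the paper's route: the paper's entire proof of this theorem is the single phrase ``Induction on derivations,'' and your mutual induction on the derivation of the subject's judgment, with case analysis over the rules, is a faithful (and considerably more detailed) elaboration of that same argument. Your identification of the positive eliminators as the delicate cases and of strictness of the Beck--Chevalley and weakening isomorphisms as what makes the bookkeeping tractable is consistent with the paper's setup, which fixes a \emph{strict} LFDC precisely so that such an induction goes through.
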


\begin{proof}
Induction on derivations.
\end{proof}

I leave to future work a full proof that the well typed syntactic fragment of this type theory, quotiented by judgmental equality as defined in \textbf{Def. \ref{judgmental}}, forms a strict LFDC with type-level weakening, function types, products, and the appropriate structural properties. Suffice it to say, however, that the above two propositions form the backbone of such a proof, and moreover demonstrate at least morally that the syntax of this type theory completely captures the type-theoretic language of such LFDCs.

\subsection{Substructuralization \& Decidability}

As to the \emph{decidability} of the described type theory, in general, one should not anticipate decidability when interpreting the type theory in an arbitrary (strict) LFDC. One may hope, however, to isolate a computationally well-behaved subclass of (strict) LFDCs, ensuring decidability for the associated type theories. For this purpose, I define a notion of \emph{substructuralization} that allows to convert an intuitionistic dependent type theory into a substructural one.

\begin{dfn}
Let $\mathcal{T}$ be an intuitionistic dependent type theory with \emph{judgmentally-distinct} type formers $\top, \Sigma, \Pi, \leftarrow, \rightarrow, \times$ such that \begin{enumerate}
\item $\top$ satisfies the rules of a unit type in intuitionistic type theory
\item $\Sigma$ satisfies the rules of a dependent pair type former in intuitionistic type theory
\item $\Pi$ satisfies the rules of a dependent function type former in intuitionistic type theory
\item $\to$ and $\leftarrow$ both satisfy the rules of function type formers in intuitionsitic type theory
\item $\times$ satisfies the rules of a product type former in intuitionistic type theory
\end{enumerate} Then the \emph{substructuralization} $\mathfrak{S}(\mathcal{T})$ of $\mathcal{T}$ is defined as the interpretation of the substructural dependent type theory defined above (with any combination of weakening, contraction, and exchange) in the strict LFDC with type-level weakening, function types, and products, given by the syntactic model of $\mathcal{T}$ as defined in \textbf{Ex. \ref{synmodels}}.
\end{dfn}

The substructuralization of an intuitionistic dependent type theory $\mathcal{T}$ essentially inherits its computational procedures from $\mathcal{T}$ while imposing substructural constraints upon the typing rules for $\mathcal{T}$. The essential idea behind this is that the \emph{terms} of substructural type theory denote the same sorts of data as those of intuitionistic type theory, i.e. functions, pairs, etc., whose computational behavior is already well understood and unchanged by the substructural rules. Hence we should be able to bootstrap ourselves up from an intuitionistic dependent type theory to a substructural dependent type theory, whilst preserving all the desirable computational properties thereof. To this effect, we have the following theorem:

\begin{theorem}
Let $\mathcal{T}$ be an intuitionistic dependent type theory that is \emph{normalizing} (i.e. every term of $\mathcal{T}$ computes to a judgmentally-equal normal form), and that has \emph{Type-Canonicity} in that: \begin{itemize}
\item if $R$ is judgmentally equal to $\top$, then its normal form is $\top$,
\item if $R$ is judgmentally equal to $\Sigma_S(T)$, then its normal form is $\Sigma_{S'}(T')$ for some types $S', T'$ in normal form,
\item etc.
\end{itemize} then $\mathfrak{S}(\mathcal{T})$ has the following properties: \begin{enumerate}
\item Judgmental equality of types is decidable. I.e. given semantic types $\llb{S}$ and $\llb{T}$, we may check whether these are equal in the syntactic model of $\mathcal{T}$ by reducing them both to normal form and comparing these normal forms for $\alpha$-equivalence.
\item Pattern-matching on types is decidable. E.g. to check whether a type is of the form $\varoplus_{\llb{S}}(\llb{T})$, we reduce it to normal form and check whether this normal form has the form $\Sigma_{S'}(T')$.
\end{enumerate}
\end{theorem}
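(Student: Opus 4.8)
The plan is to exploit the fact that, by construction, the semantic types of $\mathfrak{S}(\mathcal{T})$ are precisely the types of $\mathcal{T}$ taken up to judgmental equality, with each substructural type former interpreted as the corresponding judgmentally-distinct type former of $\mathcal{T}$. First I would unwind the definition of substructuralization together with the syntactic-model construction of \textbf{Ex. \ref{syncats}} and \textbf{Ex. \ref{synmodels}}: the strict LFDC underlying $\mathfrak{S}(\mathcal{T})$ is the syntactic category of $\mathcal{T}$, so a semantic type $\llb{S} \in \mathfrak{T}(\llb{\llp{\Gamma}})$ is an equivalence class of $\mathcal{T}$-types in the corresponding context, the unit type $\mathbbm{1}$ is $\top$, the dependent pair type $\varoplus$ is $\Sigma$, the term-level function types $\sslash, \bbslash$ are the two function formers $\leftarrow, \rightarrow$, the type-level function type $\forall$ is $\Pi$, and the product $\times$ is $\times$. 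Since every structural operation of the LFDC (substitution, type-level weakening, etc.) is interpreted by the matching operation of $\mathcal{T}$, which respects judgmental equality, these denotations are well-defined, and judgmental equality of semantic types in $\mathfrak{S}(\mathcal{T})$ is literally judgmental equality of the underlying types in $\mathcal{T}$. Note that the choice of structural rules (weakening, contraction, exchange) on \emph{terms} is irrelevant here, as items 1 and 2 concern only types.

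For item 1, this reduces decidability of type equality in $\mathfrak{S}(\mathcal{T})$ to decidability of judgmental type equality in $\mathcal{T}$. Here I would invoke normalization: choosing any representatives of $\llb{S}$ and $\llb{T}$, reduce each to normal form and test the two normal forms for $\alpha$-equivalence. That a type is judgmentally equal to its normal form is exactly the normalization hypothesis, and I would use that normal forms are a \emph{complete} invariant for judgmental equality — judgmentally equal types having $\alpha$-equivalent normal forms — so that this test is a genuine decision procedure; $\alpha$-equivalence of normal forms is manifestly decidable.

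For item 2, pattern-matching a semantic type $\llb{R}$ against the shape $\varoplus_{\llb{S}}(\llb{T})$ amounts, under the above correspondence, to asking whether the $\mathcal{T}$-type $R$ is judgmentally equal to some $\Sigma_{S'}(T')$. Type-Canonicity answers this directly: $R$ is judgmentally equal to a $\Sigma$-type iff its normal form has head former $\Sigma$, in which case the normal form already exhibits witnesses $S', T'$ in normal form. Thus I would normalize $R$ and inspect its outermost type former, and the remaining clauses of Type-Canonicity (the ``etc.'') decide matching against $\top, \Pi, \leftarrow, \rightarrow, \times$ in the same way.

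The step I expect to be the crux is establishing the structural correspondence of the first paragraph with enough care that the head former of a normal form is \emph{unambiguous}. This is precisely where the requirement that $\top, \Sigma, \Pi, \leftarrow, \rightarrow, \times$ be judgmentally-distinct type formers of $\mathcal{T}$ does its work: it guarantees that no normal form is simultaneously, say, a $\Sigma$-type and a product, so that inspecting the head of a normal form yields a well-defined answer and the procedures of items 1 and 2 cannot be fooled by an accidental coincidence of distinct substructural formers. A secondary point to verify is that the normalization hypothesis genuinely supplies a normal-form operation that is complete for judgmental equality, not merely that normal forms exist; should the ambient notion of normalization guarantee only existence, I would supplement it with confluence/uniqueness of reduction to close item 1.
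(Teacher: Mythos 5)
Your proposal is correct and follows essentially the same route the paper takes: the paper gives no separate proof beyond the decision procedures embedded in the theorem statement itself, which are exactly your reduction of semantic type equality in $\mathfrak{S}(\mathcal{T})$ to judgmental equality in $\mathcal{T}$ (decided by normalize-and-compare) and of pattern-matching to head-former inspection via Type-Canonicity, with the judgmental distinctness of $\top, \Sigma, \Pi, \leftarrow, \rightarrow, \times$ doing precisely the disambiguating work you identify. Your closing caveat — that normalization must yield normal forms that are a \emph{complete} invariant for judgmental equality, not merely judgmentally-equal representatives — is a genuine implicit assumption of the paper's statement, and flagging it is a sound refinement rather than a divergence.
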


\begin{corollary}
If an intuitionistic dependent type theory $\mathcal{T}$ satisfies the conditions of the above theorem, then type checking for $\mathfrak{S}(\mathcal{T})$ is decidable. Examining the rules for $\mathfrak{S}(\mathcal{T})$, we see that these require only the abilities to 1) pattern match on expressions (trivial), 2) pattern match on types using primitive type formers (follows from the above theorem), 3) pattern match for type-level weakening (can be done by checking that the weakened variable does not occur freely in a type), and 4) check types for equality (follows from the above theorem).
\end{corollary}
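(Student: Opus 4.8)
The plan is to present type-checking for $\mathfrak{S}(\mathcal{T})$ as a terminating, finitely-branching recursive decision procedure over the four judgments of \textbf{Def. \ref{judgmental}}, and to reduce every non-recursive step it performs to an operation shown decidable by the preceding theorem. First I would verify that the rules are syntax-directed in the \emph{subject} of each judgment: each introduction form is matched by a unique introduction rule, each elimination form by a unique elimination rule, and the \textsc{Annotate}/\textsc{Embed} rules are separated by whether the subject reads $t : T$ or $\underline{e}$. For a variable subject $x$, the applicable instance of identity/weakening/contraction is pinned down by the fixed combination of structural properties carried by $\mathfrak{S}(\mathcal{T})$ together with the unique location of $x$ in the type- versus term-level portion of the context; the two unit-introduction variants overlap only when the term-level context is empty, where they agree on their output, so no real ambiguity results.

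Next I would identify and bound the residual nondeterminism, of which there are two genuine sources. The exchange rules \textsc{ExchI}/\textsc{ExchE} permute the term-level context while leaving the subject untouched; since permutations compose, any maximal block of consecutive exchanges collapses to a single permutation, so it suffices to enumerate the finitely many permutations of the finite term-level context immediately before applying a subject-directed rule. Separately, the elimination and pairing rules split the ambient context into contiguous blocks (the $\llp{\Delta},\llp{\Theta},\llp{\Phi}$ of, e.g., the elimination rule for $\sslash$ or the dependent-pair eliminations), and recovering the split points from the given context is a choice among the finitely many ways to cut a finite list. Folding both the permutation choice and the split choice into the branching at each node turns the search into a finitely-branching tree.

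Termination then follows by well-founded induction on the size of the subject expression: in every rule other than exchange, each recursive premise takes a proper subexpression of the conclusion's subject as its own subject (the components $S,T,R,U,V,s,t,u,v,r,e,f$ are all subterms of the matched form), so this measure strictly decreases, while the permutation and split choices leave the subject fixed but are exhausted by the finite branching at a single node. It then remains to check that each atomic operation demanded by a premise is decidable: the equality tests $\llb{R} = \llb{T}$ (as in \textsc{Embed}) are decidable by part (1) of the theorem, via normalization in the syntactic model of \textbf{Ex. \ref{synmodels}} and comparison of $\beta\eta$-normal forms up to $\alpha$-equivalence; the pattern-matches that extract, e.g., $\sslash_{\llb{S}}(\llb{T})$ or $\varoplus_{\llb{S}}(\llb{T})$ from an inferred type are decidable by part (2); the type-level-weakening checks reduce to deciding whether a designated variable occurs free in a normal form, which is decidable; and signature lookups are decidable under the standing assumption that $\Sigma$ is computably presented. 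Assembling these, the procedure is total and any two successful derivations of the same judgment yield equal outputs, which is exactly decidability of type-checking.

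The hard part will not be the decidability of any single operation -- those are delivered by the theorem -- but the bookkeeping that shows the non-syntax-directed exchange rules do not break termination. Because exchange leaves the subject fixed, a naive induction on derivations fails to yield a terminating algorithm, and the crux is the amalgamation argument: that consecutive exchanges may be fused into one permutation, so the search need only interleave finitely many context permutations with subject-reducing steps. Making this rigorous -- and confirming in tandem that the finitely many context splittings compose correctly with the Beck-Chevalley and weakening coherences used to assemble the output morphisms -- is where the care must go; once it is in place, the stated corollary follows.
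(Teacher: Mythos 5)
Your proposal is correct and takes essentially the same route as the paper: the paper's proof is exactly the reduction of type-checking to the four decidable capabilities (pattern-matching on expressions, on types via Type-Canonicity and normalization, on type-level weakening via a free-variable occurrence check, and type equality via comparison of normal forms), with the exchange-induced nondeterminism already dispatched in the paper's earlier remark that there are only finitely many context permutations, so decidability is unaffected. Your additional scaffolding --- verifying syntax-directedness, fusing consecutive exchanges into a single permutation, finitely enumerating context splits, and establishing termination by structural recursion on the subject --- makes explicit bookkeeping that the paper leaves implicit in its design discipline (``at most one rule applies to any syntactic form of the subject'') but does not change the argument.
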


\subsection{Application: Linear Logical Frameworks}

I come now to an example of the practical advantage of this theory over prior substructural dependent type theories, namely: a suitably substructuralized dependent type theory is particularly well-suited as a logical framework for the metatheory of \emph{linear logic}.

\begin{dfn}
We define the \emph{Logic of Left-Fibred Double Categories with Symmetry} (LLFDC$^\sigma$), as the substructuralization of intuitionistic dependent type theory with at least one universe, which includes the exchange rules but neither term-level weakening nor contraction as structural rules.
\end{dfn}

\begin{dfn} We may add atomic type families and axioms / constants to LLFDC$^\sigma$ by including corresponding variables of the appropriate types in the underlying intuitionistic dependent type theory and then adding these to the signature of its substructuralization. Hence for an atomic type family variable $P$ and a closed LLFDC$^\sigma$-type $S$, we write $P(S) \in \Sigma$ to mean $P(\llb{S}) \gg P \in \Sigma$, and similarly for a constant variable $p$, we write $p : S \in \Sigma$ to mean $p : \llb{S} \gg p \in \Sigma$.
\end{dfn}

I write $\forall x_1, \dots, x_n : S . T$ and $\bigoplus x_1, \dots, x_n  : S . T$ as abbreviations for $\forall x_1 : S . \dots \forall x_n : S . T$ and $\bigoplus x_1 : S . \dots \bigoplus x_n : S . T$, respectively. Similarly, I write $\langle t_1, \dots, t_n \rangle$ in place of $\langle t_1, \langle t_2, \dots \langle t_{n-1}, t_n \rangle \dots \rangle \rangle$. Additionally, I shall write $S \varotimes T$ for $\bigoplus x : S . T$ when $x$ does not occur free in $T$, and $S \multimap T$ instead of $T \bbslash S$. I also generally write $\underline{e}$ simply as $e$, i.e. I treat embedding of elimination forms into introduction forms as an implicit operation, rather than an explicit one.

I claim that LLFDC$^\sigma$ is an ideal setting in which to conduct the metatheory of ordinary (intuitionistic) linear logic, as I shall now demonstrate. I will show, in particular, that LLFDC$^\sigma$ is capable of representing cut admissibility for intuitionistic linear sequent calculus in a manner which avoids the problems with such representations in prior linear dependent type theories highlighted by Reed \cite{reed}.

I follow the method of Cervesato \& Pfenning \cite{cervesato-pfenning}, as adapted by Reed \cite{reed}, in representing linear sequent calculus via HOAS, with suitable modifications for the specificities of LLFDC$^\sigma$. We begin by postualting an atomic type of \emph{propositions} $$
\mathsf{Prop}(\mathbbm{1}) \in \Sigma
$$ which is then used to parameterize atomic type families of \emph{antecedents} and \emph{consequents}, respectively: $$
\mathsf{Ante}(\mathsf{Prop}\langle \rangle) \in \Sigma \qquad \mathsf{Conse}(\mathsf{Prop}\langle \rangle) \in \Sigma
$$ the idea being that a derivation $a_1, \dots, a_n \vdash c$ in intuitionistic linear sequent calculus corresponds to a closed term of type $$\forall a_1, \dots, a_n, c : \mathsf{Prop} \langle \rangle . (\mathsf{Ante}(a_1) \varotimes \dots \varotimes \mathsf{Ante}(a_n)) \multimap \mathsf{Conse}(c)$$ We then include constructors on propositions corresponding to the connectives of linear logic. For illustrative purposes, I concentrate on the linear implication $\multimap$, represented as follows: $$
{\multimap} : \mathsf{Prop} \langle \rangle \multimap \mathsf{Prop} \langle \rangle \multimap \mathsf{Prop} \langle \rangle \in \Sigma
$$ For the sake of legibility, I write and $a \multimap b$ as an abbreviation for $\multimap \triangleleft a \triangleleft b$. We then have the following constructors for derivations, corresponding to the left/right rules of each of implication in linear sequent calculus: $$
\begin{array}{rcll}
{\multimap} \mathsf{R} & : & \forall a,b : \mathsf{Prop}\langle \rangle .\\ & & (\mathsf{Ante}(a) \multimap \mathsf{Conse}(b)) \multimap \mathsf{Conse}(a \multimap b) & \in \Sigma\\
{\multimap} \mathsf{L} & : & \forall a,b,c : \mathsf{Prop}\langle \rangle . \mathsf{Conse}(a)\\ & & \multimap (\mathsf{Ante}(b) \multimap \mathsf{Conse}(c))\\ & & \multimap \mathsf{Ante}(a \multimap b) \multimap \mathsf{Conse}(c) & \in \Sigma
\end{array}$$ Additionally, we have the following constructors, corresponding to the Cut and Identity rules: $$
\begin{array}{rcl}
\mathsf{id} & : & \forall a : \mathsf{Prop}\langle \rangle . \mathsf{Ante}(a) \multimap \mathsf{Conse}(a) \in \Sigma\\
\mathsf{cut} & : & \forall a,b : \mathsf{Prop} \langle \rangle .\\ & & (\mathsf{Conse}(a) \otimes (\mathsf{Ante}(a) \multimap \mathsf{Conse}(b))) \multimap \mathsf{Conse}(b) \in \Sigma
\end{array}
$$ Our goal, then, is to give a procedure for converting a derivation making use of $\mathsf{cut}$ into a \emph{cut-free} derivation. For this purpose, we introduce a type family representing constructions of cut-free proofs: $$ 
\mathsf{CF}(\bigoplus a : \mathsf{Prop} \langle \rangle . \mathsf{Conse}(a)) \in \Sigma
$$ We then have the following constructors for $\mathsf{CF}$ in cases where it is applied to a proof constructed from the identity or left/right rules: $$
\begin{array}{rcll}
\mathsf{cfId} & : & \forall a : \mathsf{Prop}\langle \rangle . \forall q_a : \mathsf{Ante}(a) . \\ & & \mathsf{CF} \langle a , \mathsf{id} \cdot a \triangleleft q_a \rangle & \in \Sigma\\
\mathsf{cf}{\multimap}\mathsf{R} & : & \forall a,b : \mathsf{Prop}\langle \rangle . \forall f : (\mathsf{Ante}(a) \multimap \mathsf{Conse}(b)) .\\ & & (\forall q : \mathsf{Ante}(a) . \mathsf{CF}\langle b, f \triangleleft q \rangle)\\ & & \multimap \mathsf{CF}\langle b, {\multimap}\mathsf{R} \cdot a \cdot b \triangleleft f \rangle & \in \Sigma\\
\mathsf{cf}{\multimap}\mathsf{L} & : & \forall a,b,c : \mathsf{Prop} \langle \rangle . \forall p_a : \mathsf{Conse}(a) .\\
& & \forall f : (\mathsf{Ante}(b) \multimap \mathsf{Conse}(c)) .\\ 
& & \forall q_{a \multimap b} : \mathsf{Ante}(a \multimap b) . \\
& & \mathsf{CF}\langle a, p_a \rangle\\ & & \multimap (\forall q_b : \mathsf{Ante}(b) . \mathsf{CF}\langle c , f \triangleleft q_b \rangle)\\ & & \multimap \mathsf{CF}\langle c , {\multimap}\mathsf{L} \cdot a \cdot b \cdot c \triangleleft p_a \triangleleft f \triangleleft q_{a \multimap b} \rangle & \in \Sigma
\end{array}
$$ To handle the case where $\mathsf{CF}$ is applied to a derivation whose outermost constructor is $\mathsf{cut}$, we further introduce the following relation to capture \emph{single-step cut reduction}: $$
\mathsf{CutStep}\left(\begin{array}{l} \bigoplus a,b : \mathsf{Prop}\langle \rangle .\\ (\mathsf{Conse}(a) \otimes (\mathsf{Ante}(a) \multimap \mathsf{Conse}(b))) \times \mathsf{Conse}(b) \end{array} \right) \in \Sigma
$$ Following Reed \cite{reed}, we use the product type former $\times$ to allow forming a derivation of type $\mathsf{Conse}(b)$ in the \emph{same} context as the corresponding inputs to cut, which are themselves represented with the type $\mathsf{Conse}(a) \otimes (\mathsf{Ante}(a) \multimap \mathsf{Conse}(b))$, which must therefore split up the context accordingly. We then add the following: $$
\begin{array}{rcll} \mathsf{cfCut} & : & \forall a, b : \mathsf{Prop}\langle \rangle .\\ & & \forall p : \begin{array}[t]{@{}l} ((\mathsf{Conse}(a) \otimes (\mathsf{Ante}(a) \multimap \mathsf{Conse}(b)))\\ ~ \times ~ \mathsf{Conse}(b)) . \end{array} \\ & & \mathsf{CutStep}\langle a , b , p \rangle\\ & & \multimap \mathsf{Cf}\langle b, \pi_2(p) \rangle\\ & & \multimap \mathsf{CF}\langle b, \mathsf{cut} \cdot a \cdot b \triangleleft \pi_1(p) \rangle & \in \Sigma \end{array}
$$ We encode the cut reduction procedure via axioms of the form $$\mathsf{CutStep}\langle a,b, (\langle s[p_1, \dots, p_n], f[p_{n + 1}, \dots, p_{n + m}] \rangle, t[p_1, \dots, p_{n + m}]) \rangle$$ where $p_1, \dots, p_{n + m}$ are some universally-quantified parameters. Note that the construction of $\mathsf{CutStep}$ forces these parameters to be used linearly in each of $s,f,t$. This is the key to the correct behavior of this representation of Cut Admissibility.

For instance, we have the following axiom for handling Principal Cuts (i.e. where a left rule meets a right rule): $$
\begin{array}{rcl}
\mathsf{prin}{\multimap} & : & \forall a,b,c : \mathsf{Prop} \langle \rangle .\\
& & \forall f : \mathsf{Ante}(a) \multimap \mathsf{Conse}(b) . \\
& & \forall p_a : \mathsf{Conse}(a)\\
& & \forall g : \mathsf{Ante}(b) \multimap \mathsf{Conse}(c)\\
& & \multimap \mathsf{CutStep}\langle a \multimap b , c,\\ & & \qquad (\langle {\multimap}\mathsf{R} \cdot a \cdot b \triangleleft f , \lambda q . {\multimap} \mathsf{L} \cdot a \cdot b \triangleleft p_a \triangleleft g \triangleleft q \rangle ,\\ & & \qquad \qquad \mathsf{cut} \cdot b \cdot c \triangleleft \langle \mathsf{cut} \cdot a \cdot b \triangleleft \langle p_a, f \rangle, g \rangle)\rangle
\end{array}
$$ The other axioms for various cases arising from Cut follow similarly. Note that the linearity constraints enforced upon the parameters to $\mathsf{CutStep}$ ensure that every derivation occurring as a parameter in the inputs to a cut must be used in the same quantity in the output of the cut reduction. 

It follows that Reed's examples \cite{reed} of spurious Cut Elimination rules that can be written in other linear logical frameworks do not apply to the above. In particular, Reed considers a case where, instead of the usual Right rule for $\multimap$, we instead had $$
\begin{array}{rcl}
{\multimap}\mathsf{Rbad}_2 & : & \forall a,b : \mathsf{Prop}\langle \rangle .\\ & & (\mathsf{Ante}(a) \multimap \mathsf{Ante}(a) \multimap \mathsf{Conse}(b))\\ & & \multimap \mathsf{Conse}(a \multimap b)
\end{array}
$$ In which case the corresponding cut reduction axiom for a principal cut would have to look something like $$
\begin{array}{l}
\mathsf{CutStep}\langle a \multimap b, c ,\\ \qquad (\langle {\multimap}\mathsf{Rbad}_2 \cdot a \cdot b \triangleleft f , \lambda q . {\multimap}\mathsf{L} \cdot a \cdot b \triangleleft p_a \triangleleft g \triangleleft q \rangle ,\\ \qquad \qquad \mathsf{cut} \cdot b \cdot c \triangleleft \langle \mathsf{cut} \cdot a \cdot b \triangleleft \langle p_a, (\lambda q_a . f \triangleright q_a \triangleright q_a) \rangle, g \rangle)
\end{array}
$$ but this is ill-typed, because the variable $q_a$ gets used twice in a term-level position in $f \triangleleft q_a \triangleleft q_a$. Similarly, if we instead had  $$
{\multimap}\mathsf{Rbad}_0 : \forall a,b : \mathsf{Prop}\langle \rangle . \mathsf{Conse}(b) \multimap \mathsf{Conse}(a \multimap b)
$$ then the corresponding cut reduction axiom for a principal cut would have to look something like $$
\begin{array}{l}
\mathsf{CutStep}\langle a \multimap b, c ,\\ \qquad (\langle {\multimap}\mathsf{Rbad}_0 \cdot a \cdot b \triangleleft f , \lambda q . {\multimap}\mathsf{L} \cdot a \cdot b \triangleleft p_a \triangleleft g \triangleleft q \rangle ,\\ \qquad \qquad \mathsf{cut} \cdot b \cdot c \triangleleft \langle \mathsf{cut} \cdot a \cdot b \triangleleft \langle p_a, (\lambda q_a . f)\rangle , g \rangle )
\end{array}
$$ but this is again ill-typed because now the variable $q_a$ does \emph{not} get used at the term level in the expression $f$.

Hence Reed's problem of representing a cut admissibility relation so as to allow for \emph{only} linear programs to be represented by this relation is solved in LLFDC$^\sigma$. From here, one may apply the usual structural induction on complexity of propositions / proofs involved in cuts to show that cut reduction terminates, and hence for every derivation $p_a$ of type $\mathsf{Conse}(a)$ there is a corresponding proof of type $\mathsf{CF}\langle a , p_a \rangle$.

\section{Conclusion \& Outlook}

The foregoing, I hope, constitutes a first step toward the theory of LFDCs and their internal language, hence also toward a \emph{substructural dependent type theory} at the right level of generality. Taking stock, we have seen that many of the constructs of ordinary dependent type theory (dependent pair/function types, etc.) can be given suitably-substructural analogues in this setting, and these enjoy many of the same metatheoretic desiderata, including type soundness and decidability. Moreover, these constructs are better-behaved than those of prior substructural dependent type theories, in that they do not suffer the same issues with representing substructural constraints in the formation of parameters to type families.

Of course, much remains to be done in fleshing out this theory. On the syntactic side of things, we may hope to extend the catalogue of constructs available in substructural dependent type theory with other mainstays of type theory, e.g. universes, inductive types, coinductive types, etc. On the semantic side, a fully-rigorous treatment of the informal semantics sketched in this paper is in order, including full definitions of LFDCs and associated constructs, along with a proof that the syntactic models of this type theory are themselves strict LFDCs that are equivalent to those in which they are interpreted. Moreover, the type theory of LFDCs should be applicable in \emph{all} LFDCs, not just strict ones, provided the following conjecture:

\begin{conjecture} Every LFDC (with type-level weakening, function types, products, etc.) is equivalent to a strict one.
\end{conjecture}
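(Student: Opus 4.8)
The plan is to treat this as a strictification (coherence) problem of the same genre as the splitting of fibrations and the strictification of pseudo-algebras for a $2$-monad, but carried out one categorical dimension higher, in accordance with the paper's observation that an LFDC is (through item 4(e)) a \emph{pseudomonad} in the tricategory $\mathcal{K}$ whose objects are categories and whose $1$-cells are spans of categories with one leg a fibration. The goal is to produce, for a given LFDC, an equivalent LFDC in which substitution is strictly functorial and all the Beck--Chevalley and weakening isomorphisms are identities. I would proceed in two interleaved stages: first strictify the \emph{substitution} (fibrational) structure, and then strictify the \emph{pseudomonad} (context-extension / pair-type) structure on top of it, finally transporting the remaining type-theoretic data.

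First stage, strictifying substitution. The assignment $\Gamma \mapsto \mathfrak{T}(\Gamma)$, $f \mapsto f^*$ is a pseudofunctor $\mathfrak{C}^{\mathrm{op}} \to \mathbf{Cat}$, equivalently a cloven but non-split fibration via the Grothendieck construction. By the classical splitting theorem for fibrations (Giraud, Bénabou) --- equivalently Power's general coherence theorem, which exhibits any $\mathbf{Cat}$-valued pseudofunctor as equivalent to a strict $2$-functor --- I would replace $\mathfrak{T}$ by an equivalent strictly indexed $\mathfrak{T}'$ for which $f^*$ preserves identities and composites on the nose. Concretely I expect to use a right-adjoint-splitting / local-universes presentation, in which an object of $\mathfrak{T}'(\Gamma)$ records a type together with coherent descent data along all substitutions into $\Gamma$, so that substitution becomes literal reindexing. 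This discharges item $1$ of the strictness definition.

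Second stage, strictifying the pseudomonad. Having made substitution strict, I would transport the context-extension functor $\Gamma_\bullet$, the dependent pair type $\oplus_S$, the unit $\mathbbm{1}_\Gamma$, and all their coherence isomorphisms $\eta,\rho,\ell,\alpha,\beta,\gamma,\delta$ along the equivalence $\mathfrak{T} \simeq \mathfrak{T}'$, and then redefine these operations on $\mathfrak{T}'$ so as to absorb the isomorphisms into identities. The key leverage is that, once substitution is strict, a type-former stable under substitution only up to coherent isomorphism can be re-presented by defining it on universal elements and extending by strict reindexing; this is exactly the mechanism by which the local-universes construction of Lumsdaine and Warren forces the Beck--Chevalley isomorphisms $\gamma,\delta$ to become identities. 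The new ingredient here, absent from intuitionistic type theory, is the monoidal associativity and unitality coherence $\alpha,\beta,\eta,\rho,\ell$ of the pair type relative to context extension: in an ordinary comprehension category these are subsumed by strong sums being part of the display structure, whereas here the pair type is genuine extra structure carrying its own coherence. Strictifying it is an instance of coherence for pseudomonads in $\mathcal{K}$, for which I would invoke the analogue of Lack's strictification of pseudomonads together with the Gordon--Power--Street theorem presenting $\mathcal{K}$, up to triequivalence, by a Gray-category.

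Finally I would check that the remaining structure --- type-level weakening with its isomorphisms $\zeta,\theta,\xi,\chi$, the term- and type-level function types, and the products --- transports and strictifies compatibly, and that the whole construction preserves the defining property that the domain projection of the resulting double category is a fibration. I expect the main obstacle to be precisely the \emph{compatibility} of the two stages: the splitting that strictifies substitution and the normalization that strictifies the pseudomonad must be performed simultaneously rather than sequentially, since span composition in $\mathcal{K}$ is only strictly associative once the fibration legs have been split, while the pseudomonad coherence must itself be descended along that same splitting. Verifying that these strictifications can be carried out coherently at once --- equivalently, that the strictified fibration still carries a strict monad structure reflecting all of $\eta,\rho,\ell,\alpha,\beta$ --- is the crux of the argument, and is where the substructural, non-Cartesian setting genuinely departs from the established strictification of intuitionistic dependent type theory.
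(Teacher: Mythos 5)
You cannot be compared against the paper's proof here, because the paper has none: this statement is posed explicitly as a \emph{conjecture} and deferred to future work. Judged on its own terms, your plan is the natural expert route --- split the fibrational substitution structure first (B\'enabou/Giraud splitting, Power's coherence theorem), then strictify the pseudomonad structure identified in the paper's remark that items 1--4(e) form a pseudomonad in the tricategory of categories and spans with one fibration leg --- and you correctly isolate the genuinely new difficulty, namely that the pair type and context extension carry coherence as \emph{structure} rather than inheriting it from universal properties. But what you have written is a research programme, not a proof: by your own admission, the crux --- carrying out the two strictifications simultaneously and coherently, so that the split fibration still carries a strict monad structure absorbing $\eta, \rho, \ell, \alpha, \beta$ --- is exactly the step you do not verify, and it is also exactly where the paper's conjecture lives.

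Two concrete gaps deserve naming. First, the paper never states the coherence laws for LFDCs (they are ``certain coherence laws''), nor does it define what ``equivalent'' means for LFDCs; until both are fixed, no coherence theorem can be invoked at all, and the conjecture's content depends on the chosen notion of equivalence --- presumably equivalence of pseudomonads in the ambient tricategory, but then the additional data (type-level weakening with $\zeta, \theta, \xi, \chi$, function types, products) lies \emph{outside} the pseudomonad structure covered by 4(e) and must be handled by a separate transport argument you only gesture at. Second, your second stage leans on the local-universes mechanism of Lumsdaine--Warren, but that mechanism is essentially intuitionistic: it defines type formers on generic elements and extends by reindexing along maps whose existence is guaranteed by comprehension-category pullbacks --- precisely the universal properties that LFDCs discard. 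In an LFDC, context extension has no universal property and Beck--Chevalley ($\gamma$, $\delta$) is posited structure rather than a property of pullback squares, so ``define on universal elements and extend by strict reindexing'' has no evident substructural analogue; likewise, Lack-style strictification of pseudomonads operates in a Gray-category, and transporting your pseudomonad across the Gordon--Power--Street triequivalence re-introduces coherence cells of the very kind you are trying to eliminate, on an underlying object already modified by the stage-one splitting. So your proposal is a sensible and well-informed attack --- plausibly the first route any expert would try --- but it stands with a genuine gap at its self-identified crux, which is consistent with the paper's decision to leave the statement as a conjecture.
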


\subsection{Toward the type theory of monoidal topoi}

Beyond general development of the type theory LFDCs, a specific application of this theory is toward constructing a type theory for working internally in \emph{monoidal topoi}, i.e. topoi equipped with an additional (bi)closed monoidal structure. Such topoi arise naturally in the analysis of substructurally-typed programming languages -- if the types of a language $L$ form a monoidal category $\mathcal{M}_L$, then the presheaf category on $\mathcal{M}_L$ is a monoidal topos via Day Convolution, whose internal language is essentially the \emph{logic} of $L$-programs.

A type theory for such monoidal topoi must therefore combine aspects of ordinary dependent type theory, arising from the topos in the usual way, with the substructural aspect present in the topos due to its monoidal structure. Viewing this situation through the lens of LFDCs reveals that such monoidal topoi in fact consist of not one but \emph{two} LFDC structures, one Cartesian and given by the usual comprehension category structure on the topos, the other given by the monoidal structure on the topos as in \textbf{Ex. \ref{moncat-lfdc}}. What these two LFDC structures have in common is their shared category of contexts/closed types, i.e. the underlying topos. Generalizing this situation slightly, we may consider pairs of LFDCs whose categories of contexts/closed types are \emph{equivalent}. The functors constituting such an equivalence give ways of going back and forth between the type theories of these LFDCs in restricted contexts, and in this sense function as \emph{modalities} on these type theories.

This in turn suggests that the right way to type-theoretically capture such an equivalence is to make use of constructs from \emph{modal type theory} (c.f. \cite{multimodal}) in the type theory of LFDCs. Adapting such constructs from the usual categorical semantics of dependent type theory to the setting of LFDCs, and correspondingly from intuitionsitic to substructural dependent type theory, remains to be done. But the above reasoning suggests that if it can be carried out, such a marriage of modal and substructural type theory may yield significant benefits for the analysis of substructural programs.

Going even further, we may consider type theories interpreted not only in monoidal topoi, but monoidal $\infty$-topoi, i.e. models of Homotopy Type Theory (HoTT) equipped with a suitable notion of substructurality. In some ways, this is more natural from a computational point of view, since the internal language of 1-topoi is extensional Martin-Löf Type Theory, for which type checking is undecidable, while it is possible to give a type theory for HoTT possessing the normalization and canonicity properties \cite{sterling-angiuli}.

A closely related line of recent work is the form of linear dependent type theory devised by Mitchell Riley in his thesis \cite{riley}. This approach to linear dependent type theory makes use of ideas of bunched logic, and is based on a specific model of Homotopy Type Theory in parameterized spectra. Potential applications of this type theory in quantum certification have been further considered by e.g. Myers, Riley, Sati \& Schreiber \cite{quantum}. It remains to be seen whether and how this type theory is related to the form of substructural dependent type theory developed in this paper, i.e. whether one subsumes the other, etc. Such intertheoretic connections thus offer yet another avenue to be explored in developing the general theory of substructural dependent types.

\raggedright
\printbibliography

\pagebreak

\appendix

\section{Auxiliary Definitions}

\begin{dfn}
The functor $\oplus_{\llp{\Delta}}$ associated to a telescope $\llp{\Gamma}, \llp{\Delta}$ is defined by recursion on $\llp{\Delta}$ as follows:
$$
\oplus_{\epsilon} = \text{Id}_{\mathfrak{T}(\llb{\llp{\Gamma}})} \qquad \oplus_{\llp{\Delta}, x : \llb{S}} = \oplus_{\llp{\Delta}} \circ \oplus_{\llb{S}}
$$ 
\end{dfn}

\begin{dfn}
Similarly, the substitution $f^*(\llp{\Theta})$ of a morphism $f : \llb{\llp{\Gamma}} \to \llb{\llp{\Delta}} \in \mathfrak{C}$ into a telescope $\llp{\Delta}, \llp{\Theta}$ is defined by recursion on $\llp{\Theta}$: $$
f^*(\epsilon) = \epsilon \qquad f^*(x : \llb{S}, \llp{\Theta}) = x : f^*(\llb{S}), f_\bullet(\llb{S})^*(\llp{\Theta})
$$ and likewise, the weakening $\omega_{\llp{\Gamma}}^{\llb{S}}(\llp{\Delta})$ of a telescope $\llp{\Gamma}, \llp{\Delta}$ is defined by recursion on $\llp{\Delta}$: $$
\omega_{\llp{\Gamma}}^{\llb{S}}(\epsilon) = \epsilon \quad \omega_{\llp{\Gamma}}^{\llb{S}}(y : \llb{T}, \llp{\Delta}) = y : \omega_{\llb{\llp{\Gamma}}}^{\llb{S}}(\llb{T}), \Omega_{\llp{\Gamma}}^{\llb{S}, \llb{T}}(\llp{\Delta})
$$ where $$
\Omega_{\llp{\Gamma}}^{\llb{S}, \llb{T}}(\epsilon) = \epsilon
$$ and $$
\begin{array}{l} \Omega_{\llp{\Gamma}}^{\llb{S}, \llb{T}}(z : \llb{R}, \llp{\Delta}) =\\ \quad z : \Omega_{\llb{\llp{\Gamma}}}^{\llb{S}, \llb{T}}(\llb{R}), (\alpha^{-1})^*(\Omega_{\llb{\llp{\Gamma}}}^{\llb{S}, \varoplus_{\llb{T}}(\llb{R})}(\alpha^*(\llp{\Delta})))
\end{array}
$$ 
\end{dfn}

\begin{dfn}
The type-level reassociation $${\bullet}\mathsf{asc}_{\llp{\Gamma}}^{\llp{\Delta}} : \llb{\llp{\Gamma}, \llp{\Delta}} \simeq \llb{\llp{\Gamma}}_\bullet(\llb{\llp{\Delta}})$$ of a telescope $\llp{\Gamma}, \llp{\Delta}$ is defined by recursion on $\llp{\Delta}$: $$
{\bullet}\mathsf{asc}_{\llparenthesis \Gamma \rrparenthesis}^\epsilon = \eta^{-1} \qquad {\bullet}\mathsf{asc}_{\llparenthesis \Gamma \rrparenthesis}^{x : \llbracket S \rrbracket, \llparenthesis \Delta \rrparenthesis} = {\bullet}\mathsf{asc}_{\llparenthesis \Gamma \rrparenthesis, x : \llbracket S \rrbracket}^{\llparenthesis \Delta \rrparenthesis} \circ \alpha^{-1}
$$ and likewise the term-level reassociation $${\oplus}\mathsf{asc}_{\llp{\Gamma}}^{\llp{\Delta} \mid \llb{S}} : \oplus_{\llp{\Delta}} (\llb{S}) \simeq \oplus_{\llb{\llp{\Delta}}}((({\bullet}\mathsf{asc}_{\llp{\Gamma}}^{\llp{\Delta}})^{-1})^*(\llb{S}))$$ $$
{\oplus}\mathsf{asc}_{\llp{\Gamma}}^{\epsilon \mid \llb{S}} = \ell^{-1} \qquad {\oplus} \mathsf{asc}_{\llp{\Gamma}}^{x : \llb{T}, \llp{\Delta} \mid \llb{S}} = \beta \circ \oplus_{\llb{T}}({\oplus}\mathsf{asc}_{\llp{\Gamma}, x : \llb{T}}^{\llp{\Delta} \mid \llb{S}})
$$
\end{dfn}

\begin{dfn}
The substitution ${\downarrow}(\llb{s}) : \llb{\llp{\Gamma}, \llp{\Delta}} \to \llb{\llp{\Gamma}}_\bullet(\llb{S})$ of a term $\llb{s} : \llb{\llp{\Delta}} \to \llb{S} \in \mathfrak{T}(\llb{\llp{\Gamma}})$ is defined as $$
{\downarrow}(\llb{s}) = \llbracket \llparenthesis \Gamma \rrparenthesis \rrbracket_\bullet(\llbracket s \rrbracket)\circ {\bullet}\mathsf{asc}_{\llparenthesis \Gamma \rrparenthesis}^{\llparenthesis \Delta \rrparenthesis}
$$ and ${\Downarrow}(\llb{s}) : \llb{\llp{\Gamma}, \llp{\Delta}, {\downarrow}(\llb{s})^*(\llp{\Theta})} \to \llb{\llp{\Gamma}, x : \llb{S}, \llp{\Theta}}$ is defined as $$
{\Downarrow}(\llb{s}) = {\bullet}\mathsf{asc}_{\llp{\Gamma}, x : \llb{S}}^{\llp{\Theta}} \circ {\downarrow}(\llb{s})_\bullet(\llb{\llp{\Theta}}) \circ ({\bullet}\mathsf{asc}_{\llp{\Gamma}, x : \llb{S}}^{\llp{\Theta}})^{-1}
$$
\end{dfn}

\begin{dfn}
The lifting functors $\uparrow_{\llp{\Gamma}}^{\llp{\Delta}} : \mathfrak{T}(\llb{\llp{\Gamma}}) \to \mathfrak{T}(\llb{\llp{\Gamma}, \llp{\Delta}})$ and $\Uparrow_{\llp{\Gamma}}^{\llp{\Delta} \mid \llb{S}} : \mathfrak{T}(\llb{\llp{\Gamma}}_\bullet(\llb{S})) \to \mathfrak{T}(\llb{\llp{\Gamma}, \llp{\Delta}}_\bullet(\uparrow_{\llp{\Gamma}}^{\llp{\Delta}}(\llb{S})))$ for a telescope $\llp{\Gamma}, \llp{\Delta}$ and type $\llb{S} \in \mathfrak{T}(\llb{\llp{\Gamma}})$ are defined by recursion on $\llp{\Delta}$ as follows: $$
\uparrow_{\llp{\Gamma}}^{\epsilon} = \text{Id}_{\mathfrak{T}(\llb{\llp{\Gamma}})} \qquad \uparrow_{\llp{\Gamma}}^{y : \llb{T}, \llp{\Delta}} = \uparrow_{\llp{\Gamma}, y : \llb{T}}^{\llp{\Delta}} \circ \omega_{\llb{\llp{\Gamma}}}^{\llb{T}}
$$ $$
\Uparrow_{\llparenthesis \Gamma \rrparenthesis}^{\epsilon, \llbracket S \rrbracket} = \text{Id}_{\mathfrak{T}(\llbracket \llparenthesis \Gamma \rrparenthesis \rrbracket_\bullet(\llbracket S \rrbracket))} \quad \Uparrow_{\llparenthesis \Gamma \rrparenthesis}^{y : \llbracket T \rrbracket, \llparenthesis \Delta \rrparenthesis \mid \llbracket S \rrbracket} = {\Uparrow}_{\llparenthesis \Gamma \rrparenthesis, y : \llbracket T \rrbracket}^{\llparenthesis \Delta \rrparenthesis \mid \omega_{\llbracket \llparenthesis \Gamma \rrparenthesis \rrbracket}^{\llbracket T \rrbracket}(\llbracket S \rrbracket)} \circ \Omega_{\llbracket \llparenthesis \Gamma \rrparenthesis \rrbracket}^{\llbracket T \rrbracket, \llbracket S \rrbracket}
$$ and the telescopic weakening functor $$\mathsf{W}_{\llp{\Gamma}}^{\llb{S} \mid \llp{\Delta}} : \mathfrak{T}(\llb{\llp{\Gamma}, \llp{\Delta}}) \to \mathfrak{T}(\llb{\llp{\Gamma}, x : \llb{S}, \omega_{\llp{\Gamma}}^{\llb{S}}(\llp{\Delta})})$$ is defined as$$
\mathsf{W}_{\llp{\Gamma}}^{\llb{S} \mid \llp{\Delta}} = ({\bullet}\mathsf{asc}_{\llp{\Gamma}, x : \llb{S}}^{\llp{\Delta}})^* \circ \Omega_{\llb{\llp{\Gamma}}}^{\llb{S}, \llb{\llp{\Delta}}} \circ (({\bullet}\mathsf{asc}_{\llp{\Gamma}, x : \llb{S}}^{\llp{\Delta}})^{-1})^*
$$
\end{dfn}

\begin{dfn}
The projection map $$\mathsf{proj}_{\llp{\Gamma}}^{\llb{\Delta} \mid \llb{S}} : \llb{\llp{\Delta}, x : {\uparrow}_{\llp{\Gamma}}^{\llp{\Delta}}(\llb{S}) }\xrightarrow[]{} \llb{S} \in \mathfrak{T}(\llb{\llp{\Gamma}})$$ for each telescope $\llp{\Gamma}, \llp{\Delta}$ and type $\llb{S} \in \mathfrak{T}(\llb{\llp{\Gamma}})$ is defined by recursion on $\llp{\Delta}$ as follows: $$
\mathsf{proj}_{\llparenthesis \Gamma \rrparenthesis}^{\epsilon \mid \llbracket S \rrbracket} = \text{id}_{\llbracket S \rrbracket} $$ $$ \mathsf{proj}_{\llparenthesis \Gamma \rrparenthesis}^{\llparenthesis \Delta \rrparenthesis, x : \llbracket T \rrbracket \mid \llbracket S \rrbracket} = \mathsf{proj}_{\llparenthesis \Gamma \rrparenthesis}^{\llparenthesis \Delta \rrparenthesis \mid \llbracket S \rrbracket} \circ \varoplus_{\llparenthesis \Delta \rrparenthesis}(\ell \circ \langle \top_{\llbracket T \rrbracket}, - \rangle)
$$
\end{dfn}
\end{document}